\let\proof\relax 
\let\endproof\relax 
\newtheorem{definition}{Definition}
\newtheorem{lemma}{Lemma}
\newtheorem{assumption}{Assumption}
\newtheorem{proposition}{Proposition}
\newtheorem{theorem}{Theorem}
\newtheorem{corollary}{Corollary}
\newenvironment{hproof}{%
  \proof}{\endproof}
\begin{document}
\title{Distributed Stabilization of Two Interdependent Markov Jump Linear Systems with Partial Information}

\author{Guanze Peng, Juntao Chen, and Quanyan Zhu
\thanks{The authors are with the Department of Electrical and Computer Engineering, Tandon School of Engineering,
        New York University, Brooklyn, USA.
        {Email: \tt\small \{guanze.peng,jc6412,qz494\}@nyu.edu}}%
}

\maketitle

\begin{abstract}
In this paper, we study the stabilization of two interdependent Markov jump linear systems (MJLSs) with partial information, where the interdependency arises as the transition of the mode of one system depends on the states of the other system. First, we formulate a framework for the two interdependent MJLSs to capture the interactions between various entities in the system, where the modes of the system cannot be observed directly. Instead, a signal which contains information of the modes can be obtained. Then, depending on the scope of the available system state information (global or local), we design centralized and distributed controllers, respectively, that can stochastically stabilize the overall interdependent MJLS. In addition, the sufficient stabilization conditions for the system under both types of information structure are derived. Finally, we provide a numerical example to illustrate the effectiveness of the designed controllers. 
\end{abstract}

\begin{keywords} 
Interdependent systems, Markov jump linear systems, Distributed stabilization, Partial information.
\end{keywords}

\section{Introduction}
Dynamic systems subject to random abrupt changes in their structures and parameters can be modeled by stochastic jump systems. Particularly, when the random jump process is described by a Markovian process with given transition rates, the system is categorized into the class of Markov jump system. Extensive research and investigations have been done on the stability analysis and (optimal) control design of {Markov jump linear systems (MJLSs)} \cite{c1,c2,c3,c4,c5}. Two common features in the adopted system model in these literature are: (i) the state transition rate matrix is time-invariant, i.e., the transition rate matrix is constant; (ii) the Markov parameters of the transition matrix can be accessed.

However, in real cases, the transition rate matrix of a system can be related to the system state. For example, the failure probability of a wind turbine is related to its used time, level of wear, stress and stiffness on the blades \cite{c7}. 
Thus, the general Markov jump system models considered in \cite{c1,c8} are not directly applicable to these real-world applications. Moreover, the modes of the system often cannot be accessed, such as robot navigation problems, machine maintenance, and planning under uncertainty \cite{c9,c10,c11}. In such cases, the modes can only be inferred from the emitted distorted signals. To address these problems, \cite{c12} has modeled the system as a state-dependent MJLS with partial information, in which the transition rate matrix is time-varying due to the evolution of the dynamical system and the controller only has access to the signals providing partial information of the system modes rather than the modes directly. Note that in all above literature, their focused model contains a single Markov jump system. 

\begin{figure}[!t]
\begin{centering}
\includegraphics[width=1\columnwidth]{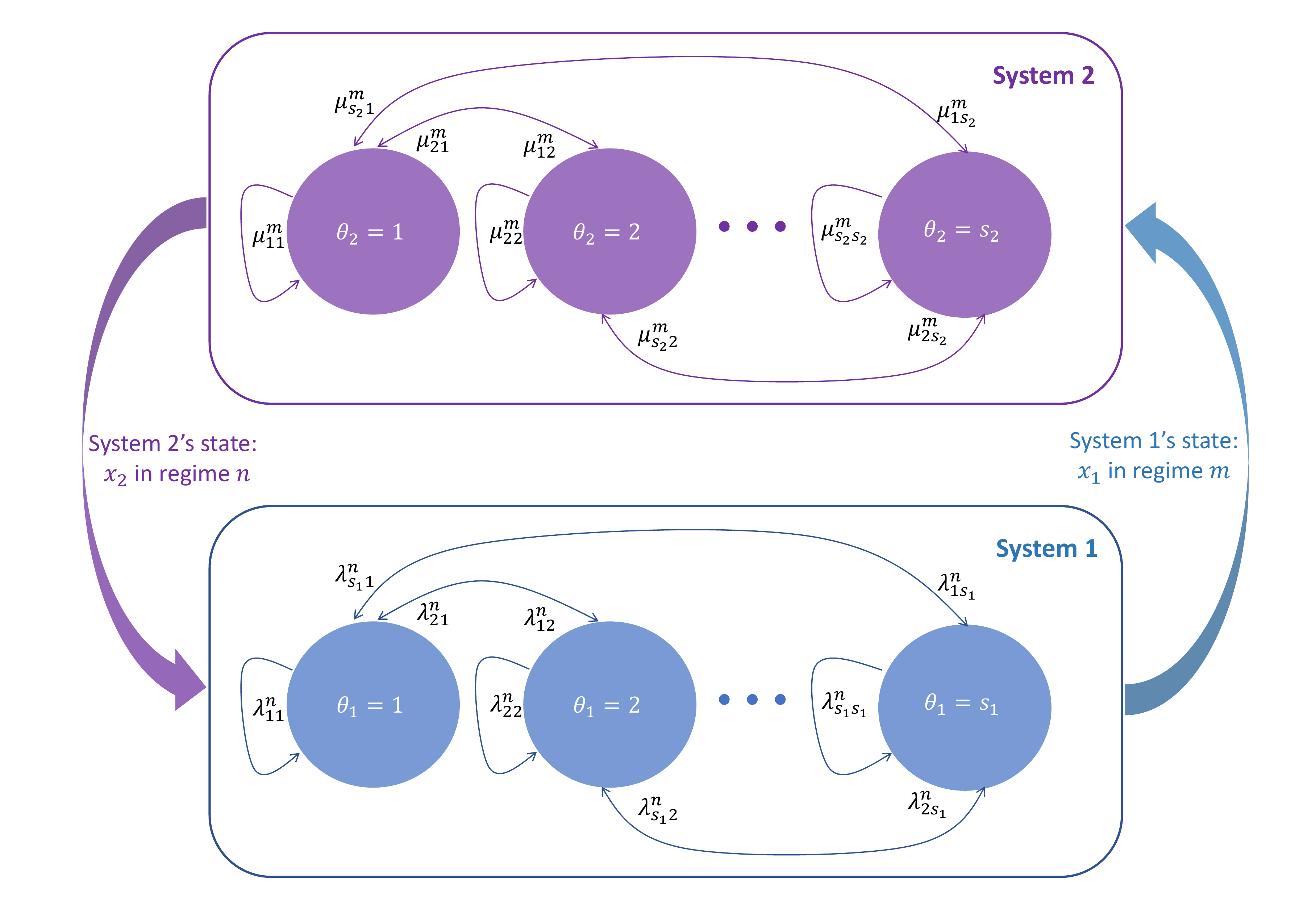}
\end{centering}
\caption{\label{2systemschain}
Two interdependent Markov jump systems model. The operational mode of one system is influenced by the state of the other system.}
\end{figure}

With the emerging of advanced information and communication technologies (ICTs), the real-world systems are becoming more complex. {One main characteristic of these modern control systems is that they are interdependent rather than isolated which forms system-of-systems \cite{c13,c14,c15}. In many examples, we see that the state/condition of one system will have an impact on the operation of other systems. For instance, this particular structure of interdependencies can arise from cascading failures among various entities in the homogeneous and heterogeneous networks \cite{c16,c17,c18}. The Markov jump mode represents the failure mode of an infrastructure. The failure of an infrastructure can be influenced by the state of another. Hence, the cascading failures can be modeled through the interdependencies across mode and state. One specific example lies in the interdependent communication and power systems which can be seen as a class of cyber-physical systems (CPS). The power system operator leverages the communication for real-time control of the grids, while the communication system requires energy support from the power system for functioning purposes. Thus, a point of failure in one system will propagate to the other one due to their interdependencies.} To capture these interdependent features in the networked systems, the traditional single Markov jump system is not sufficient. Therefore, to better understand the interdependencies between different systems and also design controllers for the complex systems, we establish an interdependent MJLS framework in this paper. One illustrative example of two coupled MJLSs is depicted in Fig. \ref{2systemschain}.

Based on the interdependent MJLS model, specifically, we first derive its stability criterion and design stochastic stabilizing controllers by regarding the multiple MJLSs as an integrated system. In addition, in order to preserve the distributed nature of various coupled jump systems, we design the distributed stabilizing controllers for each individual system. 

The main contributions of this paper are summarized as the following.{
\begin{enumerate}
\item We establish an interdependent MJLS model with partial information to capture the interactions and couplings in the complex networks, where Markov mode parameters are state dependent and partially observable.
\item We derive a sufficient stabilization condition, which is in the form of linear matrix inequality (LMI), and design stochastic stable controllers for the integrated MJLS with partial information of the modes.
\item To reduce the complexity of controller design, we design distributed stabilizing controllers for each individual system which ensure the stability of the integrated system-of-systems. 
\end{enumerate}}
 
The rest of the paper is organized as follows. {In Section \ref{prelim}, we describe the interdependent MJLS framework under partial information and the corresponding integrated system.} Section \ref{overallsystem} studies the stabilization problem from an integrated system perspective. Section \ref{individualsystem} investigates the distributed stabilizing controller design. {Simulation studies are given in Section \ref{simulation} to validate the effectiveness of the designed controllers, and finally Section \ref{conclusion} concludes the paper and gives possible directions of future work.}

\section{Interdependent MJLSs and the Integrated System Model}\label{prelim}
In this section, we present the interdependent MJLS framework as shown in Fig. \ref{2systemschain}. Specifically, we consider a model of two coupled MJLSs (System 1 and System 2):
\begin{equation}\label{two_systems}
\begin{aligned}
\dot{x}_k(t) = A_{k,\theta_k(t)}x_k(t)+B_{k,\theta_k(t)}u_k(t)+D_{k,\theta_k(t)}w_k(t),\quad k=1,2,
\end{aligned}
\end{equation}
{
where $x_k(t)\in\mathbb{R}^{N_{k,x}}$, $x_k(t_0)$ is a fixed (known) initial state of the physical plant at starting time $t_0$, $u_k(t)\in\mathbb{R}^{N_{k,u}}$ is the control input, $w_k(t)\in\mathbb{R}^{N_{k,w}}$ is the deterministic disturbance, and all these quantities lie at the physical and control layers of the entire system.} Note that $N_{k,x}$, $N_{k,u}$, and $N_{k,w}$, $k=1,2,$ are all positive integers. Furthermore, the system mode of System $k$, $\theta_k(t)\in\mathbb{R}$, is a Markov jump process with right-continuous sample paths and initial distribution $\pi_{k,0}$. The possible values of $\theta_k(t)$ are assumed to be in the finite set ${\mathcal{S}}_k:=\{1,2,...,|\mathcal{S}_k|\}$. Moreover, $A_{k,\theta_k(t)}, B_{k,\theta_k(t)},$ and $D_{k,\theta_k(t)}$, $k=1,2,$ are system matrices of appropriate dimensions whose entries are continuous functions of time $t$. We assume that the system disturbance $w_k(t)$ satisfies
$
\int_{t_0}^\infty w_k(t)^{\rm T}w_k(t) dt<\infty,\ k=1,2.
$

The MJLSs in \eqref{two_systems} are interdependent in the sense that the transition rate matrix of the system mode of one MJLS is dependent on the state of the other MJLS.  Without loss of generality, we consider the interdependency in a chain structure. Based on the interdependent structure of two MJLSs, we have 
\begin{equation}\label{transition1}
\begin{split}
\mathrm{Pr}[\theta_1(t+\Delta&)=j_1|\theta_1(t)=i_1,x_2(t)\in\mathcal{C}_2^{m_2}] \\
&= \begin{cases}
\begin{array}{cccc}
\lambda_{i_1j_1}^{m_2}\Delta+o(\Delta) &\mathrm{if}\ i_1\neq j_1,\\
1+\lambda_{i_1j_1}^{m_2}\Delta+o(\Delta) &\text{otherwise},
\end{array}\end{cases}
\end{split}
\end{equation}
and
\begin{equation}\label{transition2}
\begin{split}
\mathrm{Pr}[\theta_2(t+\Delta&)=j_2|\theta_2(t)=i_2,x_1(t)\in\mathcal{C}_1^{m_1}] \\
&= \begin{cases}
\begin{array}{cccc}
\mu_{i_2j_2}^{m_1}\Delta+o(\Delta) &\mathrm{if}\ i_2\neq j_2,\\
1+\mu_{i_2j_2}^{m_1}\Delta+o(\Delta) &\text{otherwise},
\end{array}\end{cases}
\end{split}
\end{equation}
where $\mathcal{C}_k^1,\mathcal{C}_k^2,...,\mathcal{C}_k^{M_k}$ , $k=1,2$, are nonempty and disjoint sets, and $\cup_{m_k\in\mathcal{M}_k} \mathcal{C}_k^{m_k}$ expand the space containing all the possible states of $x_k(t)$, where $\mathcal{M}_k:=\{1,2,...,M_k\}$. The transition rates for the Markov jump process, $\theta_1(t)$ and $\theta_2(t)$, are denoted by $\{\lambda_{i_1j_1}^{m_2}\}_{i_1,j_1\in\mathcal{S}_1}$ and $\{\mu_{i_2j_2}^{m_1}\}_{i_2,j_2\in\mathcal{S}_2}$, respectively. {The finite partition of the state space in \eqref{transition1} and \eqref{transition2} is motivated by the interdependent critical infrastructure applications. For example, in the coupled power and communication systems, communication delay will impact the power system operation. Depending on the significance of the delay (minimal, intermediate, enormous), power system operates under different conditions (efficient, delay-tolerant, conservative).}

In the focused scenario, for each MJLS, the system mode $\theta_k(t)$ cannot be directly observed. Instead, a signal $\hat{\theta}_k(t)$ is emitted. At time $t$, given $x_k(t)\in\mathcal{C}_k^{m_k}$ and $\theta_k(t)=i_k$, the observation probabilities are assumed to be the following conditional probabilities: 
\begin{equation*}
\begin{aligned}
&{\rm Pr}\left[\hat{\theta}_k(t)=\hat{i}_k|\theta_k(t)=i_k,x_k(t)\in\mathcal{C}_k^{m_k}\right]\ =\ \alpha^{k,m_k}_{\hat{i}_ki_k},\quad k=1,2,
\end{aligned}
\end{equation*}
{where $\hat{\theta}_k(t)\in\hat{\mathcal{S}}_k$ is the observation of System $k$, and $\hat{\mathcal{S}}_k$ is the set which contains all the possible observations of System $k$.}

{The following assumptions are made to hold throughout this work.
\begin{assumption}\label{assump1}
For each MJLS, we assume that the observation does not influence the transition of the system mode, i.e., for all $\Delta>0$, $\theta_k(t+\Delta)\in\mathcal{S}_k$, $\theta_k(t)\in\mathcal{S}_k$, $x_k(t)\in\mathcal{C}^{m_k}_k$, $m_k\in\mathcal{M}_k$, and $\hat{\theta}_k(t)\in\hat{\mathcal{S}}_k$, $k=1,2,$
    \begin{equation*}
\begin{aligned}
    &\mathrm{Pr}\left[\theta_1(t+\Delta)|\hat{\theta}_1(t),\theta_1(t),x_2(t)\right]=\mathrm{Pr}\left[\theta_1(t+\Delta)|\theta_1(t),x_2(t)\right],\\
    &\mathrm{Pr}\left[\theta_2(t+\Delta)|\hat{\theta}_2(t),\theta_2(t),x_1(t)\right]=\mathrm{Pr}\left[\theta_2(t+\Delta)|\theta_2(t),x_1(t)\right],\\
    &\mathrm{Pr}\left[\{\theta_k(t+\Delta)\}_{k=1,2}|\{\hat{\theta}_k(t),\theta_k(t),x_k(t)\}_{k=1,2}\right]\\
    &\qquad\qquad\qquad=\mathrm{Pr}\left[\{\theta_k(t+\Delta)\}_{k=1,2}|\{\theta_k(t),x_k(t)\}_{k=1,2}\right].
\end{aligned}
    \end{equation*}
\end{assumption}
\begin{assumption}\label{assump2}
For each MJLS, the set of observations is assumed to be the same as the set of system modes, i.e.,
		\begin{equation*}
		{\mathcal{S}}_k=\hat{\mathcal{S}}_k,\quad k=1,2.
		\end{equation*}
\end{assumption}}

The two interdependent MJLSs can be jointly represented as an integrated system, which is given as follows (where we omit the time index for notational clarity):
\begin{equation}\label{sys_inte12}
\begin{aligned}
&\begin{bmatrix}
\dot{x}_1\\
\dot{x}_2
\end{bmatrix} = 
\begin{bmatrix}
A_{1,\theta_1} & 0\\
0 & A_{2,\theta_2}
\end{bmatrix} 
\begin{bmatrix}
x_1\\ x_2
\end{bmatrix}+
\begin{bmatrix}
B_{1,\theta_1}  & 0\\
0 & B_{2,\theta_2} 
\end{bmatrix} 
\begin{bmatrix}
u_1\\ u_2
\end{bmatrix}\\
&\qquad\qquad \qquad\qquad \qquad\qquad   \qquad  +
\begin{bmatrix}
D_{1,\theta_1} & 0\\
0 & D_{2,\theta_2}
\end{bmatrix}
\begin{bmatrix}
w_1 \\ w_2
\end{bmatrix}.
\end{aligned}
\end{equation}
The integrated system \eqref{sys_inte12} can be rewritten more compactly as 
\begin{equation}\label{sys1}
\dot{x}(t) = A_{\theta(t)}x(t)+B_{\theta(t)}u(t)+D_{\theta(t)}w(t),
\end{equation}
where $x:=[x^{\rm T}_1,x^{\rm T}_2]^{\rm T}$, $u:=[u^{\rm T}_1,u^{\rm T}_2]^{\rm T}$,
$w:=[w^{\rm T}_1,w^{\rm T}_2]^{\rm T}$, and $\theta$ denotes the mode of the integrated system determined by $\theta_1$ and $\theta_2$. Besides, 
\begin{equation*}
    A_\theta := \begin{bmatrix}
A_{1,\theta_1} & 0\\
0 & A_{2,\theta_2}
\end{bmatrix},\ \text{and}\ 
B_\theta:=\begin{bmatrix}
B_{1,\theta_1}  & 0\\
0 & B_{2,\theta_2} 
\end{bmatrix}. 
\end{equation*}

{In Section \ref{overallsystem}, we focus on designing the stabilizing controller of the integrated system \eqref{sys_inte12} in a centralized fashion. These results will facilitate the distributed stabilizing controller design of the system \eqref{two_systems} in Section \ref{individualsystem}.}

\section{Stochastic Stability Analysis and Control of the Integrated MJLS}\label{overallsystem}
In this section, we analyze the stability of the integrated system \eqref{sys1} and derive its state-feedback stabilizing control. 
{Note that there exists exact correspondence between \eqref{two_systems} and \eqref{sys1} in terms of system parameters including transition probabilities and observation probabilities. To ease the presentation, we redefine the notations of critical variables succinctly for the integrated system \eqref{sys1}. }


Recall that the integrated system mode $\theta$ is determined by $\theta_1\in \mathcal{S}_1$ and $\theta_2\in \mathcal{S}_2$. Then, we define by $\mathcal{S}:=\{1,2,...,|\mathcal{S}_1||\mathcal{S}_2|\}$ the finite set which contains all the possible system modes $\theta(t)$ of \eqref{sys1}.
Furthermore, let $\mathcal{C}^1,\mathcal{C}^2,...,\mathcal{C}^M$ be nonempty and disjoint sets, and $\cup_{m\in\mathcal{M}}\mathcal{C}^m$ expand the space containing all the possible states of $x(t)$, where $\mathcal{M}:=\{1,2,...,M\}$. As $x(t)$ contains both subsystems' states, its partition into $M$ sets are based on the corresponding partitions of $x_1(t)$ and $x_2(t)$, and thus $M=M_1M_2$.
Similar to \eqref{transition1} and \eqref{transition2}, the transition probabilities of system mode $\theta(t)$ in \eqref{sys1} are given by
\begin{equation}\label{transition}
\begin{split}
\mathrm{Pr}[\theta(t+\Delta&)=j|\theta(t)=i,x(t)\in\mathcal{C}^m] \\
&= \begin{cases}
\begin{array}{cccc}
\gamma_{ij}^m\Delta+o(\Delta) &\mathrm{if}\ i\neq j,\\
1+\gamma_{ij}^m\Delta+o(\Delta) &\text{otherwise},
\end{array}\end{cases}
\end{split}
\end{equation}
where the transition rates for the Markov jump process $\theta(t)$ are denoted by $\{\gamma_{ij}^m\}_{i,j\in\mathcal{S}}$.

Let $\hat{\theta}$ denotes the observation of the integrated system determined by $\hat{\theta}_1$ and $\hat{\theta}_2$. At time $t$, given the state $x(t)\in\mathcal{C}^m$ and the system mode $ \theta(t)=i\in\mathcal{S}$, we denote the probability of observing $\hat{\theta}(t)=\hat{i}\in\hat{\mathcal{S}}$ by $\alpha_{i\hat{i}}^m$, i.e., 
\begin{equation}
    {\rm Pr}\left[\ \hat{\theta}(t)=\hat{i}\ \big|\ \theta(t)=i, x(t)\in\mathcal{C}^m\right]=\alpha_{i\hat{i}}^m,
\end{equation}
where $\hat{\mathcal{S}}$ is the finite set which contains all the observations.

{For each $m\in\mathcal{M}$, we define the following notations which will facilitate the controller design. Due to Assumption \ref{assump2}, $\hat{\mathcal{S}}=\mathcal{S}$ and $[\alpha_{i\hat{i}}^{m}]_{i,\hat{i}\in{\mathcal{S}}}$ is a square matrix. If $[\alpha_{i\hat{i}}^{m}]_{i,\hat{i}\in{\mathcal{S}}}$ is invertible, define
\begin{equation}\label{beta_hat}
    [\beta_{\hat{i}i}^{m}]_{\hat{i},i\in{\mathcal{S}}} = \left([\alpha_{i\hat{i}}^{m}]_{i,\hat{i}\in{\mathcal{S}}}\right)^{-1}.
\end{equation}
Otherwise,
\begin{equation}\label{beta_hat2}
[\beta_{\hat{i}i}^{m}]_{\hat{i},i\in{\mathcal{S}}} = \left([\alpha_{i\hat{i}}^{m}]_{i,\hat{i}\in{\mathcal{S}}}\right)^{\dagger},
\end{equation}
where $(\cdot)^{\dagger}$ stands for the pseudo-inverse of a matrix.  In \eqref{beta_hat} (or \eqref{beta_hat2}), $\beta_{\hat{i}i}^{m}$ is the $(\hat{i},i)$-th entry of the inverse (or pseudo-inverse) of the observation probability matrix formed by $\alpha_{i\hat{i}}^{m}$. }

Next, the definition of stochastic stability of a system is given as follows.

\begin{definition}
The equilibrium point 0 of system \eqref{two_systems} is stochastically stable if for arbitrary $x(t_0)\in\mathbb{R}^{N_x}$, and $\theta(t_0)\in\mathcal{S}$, 
$$\mathbb{E}\left[\int_{t_0}^\infty |x(t)|^2dt\right]<\infty.$$ 
\end{definition}

In this section, we aim to design controllers such that the system \eqref{sys1} is stochastically stable. As the system mode $\theta(t)$ cannot be observed directly, the control inputs can only be designed based on $\hat{\theta}(t)$ and $x(t)$. When $x(t)\in\mathcal{C}^m$ and $\hat{\theta}(t)=\hat{i}$, the control input is designed to be of the following state-feedback linear form:
\begin{equation}\label{controllor}
    u(t)={G_{\hat{\theta}(t)}^m}x(t).
\end{equation}
As shown above, the control gain is dependent on the observation $\hat{\theta}(t)$ and the system state $x(t)$. Using this control input, the closed-loop system can be written as
\begin{equation*}
\dot{x}=A^m_{\theta\hat{\theta}}x+D_{\theta}w,
\end{equation*}
where
$
A^m_{\theta\hat{\theta}}={A}_{\theta}+{B}_{\theta}{G}_{\hat{\theta}}^m.
$

Before deriving the stochastic stability criterion of the targeted system, we present the Dynkin's formula in the following.

\begin{lemma}
Let a random process $(x(t),\theta(t))$ be a Markov process, and its stopping times are denoted by $\tau_0,\tau_1,...$ at step $0,1,...$, respectively. {For Lyapunov function $V(x(t),{\theta}(t))$, the Dynkin's formula admits the following form \cite{wuTAC2013}:
\begin{equation}\label{dynkin}
\begin{aligned}
&\mathbb{E}[V(x(t),{\theta}(t))|x(t_0),{{\theta}}({t_0})] -V(x(t_0),\theta({t_0}))\\
&=\sum_{l=0}^{l^*}\mathbb{E} \bigg[  
\int_{t\wedge\tau_{l}}^{t\wedge\tau_{l+1}}\mathcal{L}V(x(\upsilon),\theta(\upsilon))d\upsilon|x(t\wedge\tau_{l}),\theta({t\wedge\tau_{l}})\bigg],
\end{aligned}
\end{equation}}
where $\tau_{0}=0$, $l=0,1,...,l^*,\ l^*\in[0,\infty]$, $\tau_{l^*}\leq \infty$, and $\mathcal{L}V(x(t),{\theta}(t))$ is the infinitesimal generator given by
\begin{equation*}
\begin{aligned}
&\mathcal{L}V(x(t),\theta(t))\\
&=\lim_{\Delta\rightarrow 0} \frac{1}{\Delta} \Big\{
\mathbb{E}\big[
V(x(t+\Delta),{\theta}({t+\Delta}))|(x(t),\theta(t))
 \big]\\
 &\quad\quad\quad\quad\quad\quad\quad\quad\quad\quad\quad\quad\quad\quad\quad\quad\quad- V(x(t),{\theta(t)})\Big\}.
\end{aligned}
\end{equation*}
\end{lemma}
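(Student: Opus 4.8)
\begin{hproof}
The plan is to cut the fixed horizon $[t_0,t]$ at the successive stopping (jump) times $\tau_0=t_0\le\tau_1\le\tau_2\le\cdots$ of the process $(x(t),\theta(t))$, to prove an elementary one-step version of the identity on each piece $[t\wedge\tau_l,\,t\wedge\tau_{l+1}]$, and then to telescope. The conceptual anchor is the standard martingale characterization of the generator: if one shows that
\begin{equation*}
s\ \longmapsto\ V(x(s),\theta(s))-V(x(t_0),\theta(t_0))-\int_{t_0}^{s}\mathcal{L}V(x(\upsilon),\theta(\upsilon))\,d\upsilon
\end{equation*}
is a (local) martingale, then \eqref{dynkin} is exactly what localization at the $\tau_l$'s, followed by the tower property, produces.

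First I would fix $l$ and restrict attention to $[t\wedge\tau_l,\,t\wedge\tau_{l+1})$, on which $\theta$ is held constant and $x(\cdot)$ solves the closed-loop linear ODE with time-continuous coefficients, hence is $C^1$; consequently $\upsilon\mapsto V(x(\upsilon),\theta(\upsilon))$ is absolutely continuous there, and the fundamental theorem of calculus recovers the first-order (drift) part of $\mathcal{L}V$ along the continuous flow. The remaining piece of the generator is the bounded jump operator weighted by the rates $\gamma_{ij}^m$; its integral is produced by averaging, in conditional expectation, the single mode switch that may occur at $\tau_{l+1}$ against its instantaneous occurrence rate. Invoking the strong Markov property at $\tau_l$ to replace conditioning on the past by conditioning on $(x(t\wedge\tau_l),\theta(t\wedge\tau_l))$, this gives the one-step identity
\begin{equation*}
\begin{aligned}
&\mathbb{E}\big[V(x(t\wedge\tau_{l+1}),\theta(t\wedge\tau_{l+1}))\,\big|\,x(t\wedge\tau_l),\theta(t\wedge\tau_l)\big]-V(x(t\wedge\tau_l),\theta(t\wedge\tau_l))\\
&\qquad=\mathbb{E}\bigg[\int_{t\wedge\tau_l}^{t\wedge\tau_{l+1}}\mathcal{L}V(x(\upsilon),\theta(\upsilon))\,d\upsilon\,\bigg|\,x(t\wedge\tau_l),\theta(t\wedge\tau_l)\bigg].
\end{aligned}
\end{equation*}
Summing over $l=0,1,\dots,l^{*}$ and taking the outer expectation conditioned on $(x(t_0),\theta(t_0))$, the tower property makes the left side telescope; since $t\wedge\tau_0=t_0$ and, because the finitely many rates $\{\gamma_{ij}^m\}$ rule out explosion so that $\tau_l\uparrow\infty$, one has $t\wedge\tau_{l^{*}+1}=t$ once $l^{*}$ is taken large enough, the left side reduces to $\mathbb{E}[V(x(t),\theta(t))\,|\,x(t_0),\theta(t_0)]-V(x(t_0),\theta(t_0))$ and the right side is precisely the asserted sum, giving \eqref{dynkin}.

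The hard part will be making the limit rigorous when $l^{*}=\infty$, i.e.\ justifying the interchange of the infinite summation with the expectation and the passage to the limit in the telescoping sum. I would settle this with a domination estimate: on the compact interval $[t_0,t]$ the number of mode switches is almost surely finite (no explosion, since the rates $\{\gamma_{ij}^m\}$ are bounded), so $x(\cdot)$ is a concatenation of finitely many bounded linear-ODE arcs and is therefore bounded on $[t_0,t]$, and a quadratic-type Lyapunov function $V$ then makes $\mathcal{L}V$ bounded along the trajectory; hence the partial sums converge and the telescoping identity survives the limit. The same boundedness of $\{\gamma_{ij}^m\}$ on each region $\mathcal{C}^m$ underlies the first step as well, namely the identification of the instantaneous rate of change of $\mathbb{E}[V]$ with $\mathcal{L}V$ and thus the well-posedness of the generator as written.
\end{hproof}
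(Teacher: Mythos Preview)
The paper does not prove this lemma at all: it is stated with a citation to \cite{wuTAC2013} and then used as a black box in the proof of Theorem~\ref{thm21}. So there is no ``paper's own proof'' to compare against; the authors treat Dynkin's formula as a known result imported from the literature.

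Your sketch is the standard derivation of Dynkin's formula for a piecewise-deterministic Markov process, and the structure is sound: on each inter-jump interval the mode is frozen, so $V$ evolves smoothly and the drift part of $\mathcal{L}V$ is recovered by ordinary calculus; the jump part is recovered by averaging the mode switch at $\tau_{l+1}$; the strong Markov property at each $\tau_l$ and the tower property give the telescoping sum; and bounded rates preclude explosion, so on any finite horizon only finitely many terms contribute and the limit $l^*\to\infty$ is harmless. This is exactly the argument one finds in the PDMP literature and presumably in the cited reference. One small slip: you write $\tau_0=t_0$, whereas the lemma as stated sets $\tau_0=0$; this is immaterial since the paper also takes the initial time as $t_0$ and the two are identified in the subsequent use of the formula. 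Otherwise your outline is correct and, since the paper offers no proof of its own, there is nothing further to compare.
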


{Specifically, we choose the Lyapunov function to be of the following quadratic form:
\begin{equation}
    V({x}(t),{\theta}(t))= {x}^{\rm T}(t) P_{\theta(t)}{x}(t),
\end{equation}
where $P_{\theta(t)}$ is a symmetric positive definite matrix.}

{
\begin{lemma}\label{simplify}
Assume that $x(t)\in\mathcal{C}^m$, $\theta(t)=i\in\mathcal{S}$, and $\hat{\theta}(t)=\hat{i}\in{\mathcal{S}}.$ Then, the infinitesimal generator of $V$ is equal to
\begin{equation*}
\begin{aligned}
    &\mathcal{L}V({x}(t),\theta(t))\\ 
    & =  {x}^{\rm T}(t) \left( P_i\bar{A}_{i}^{m} + \bar{A}^{m\rm T}_{i}P_i +  \sum_{j\in{\mathcal{S}}} \gamma^m_{ij}P_j\right){x}(t) + 2{x}^{\rm T}(t) P_i {D}_i{w}(t),
\end{aligned}
\end{equation*}
where
$
    \bar{A}_{i}^{m}=\sum_{\hat{i}\in{\mathcal{S}}}\alpha_{i\hat{i}}^{m}A^m_{i\hat{i}}.
$
\end{lemma}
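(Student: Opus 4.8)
The plan is to evaluate $\mathcal{L}V$ directly from its definition, $\lim_{\Delta\to 0}\frac{1}{\Delta}\{\mathbb{E}[V(x(t+\Delta),\theta(t+\Delta))\mid (x(t),\theta(t))]-V(x(t),\theta(t))\}$, by expanding the one-step conditional expectation to first order in $\Delta$ and then letting $\Delta\to 0$.

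First I would apply the law of total probability over the two sources of randomness acting on $[t,t+\Delta)$: the emitted signal $\hat\theta(t)$ and the next mode $\theta(t+\Delta)$. Conditioned on $x(t)\in\mathcal{C}^m$, $\theta(t)=i$ and $\hat\theta(t)=\hat i$, which occurs with probability $\alpha_{i\hat i}^m$, the closed-loop dynamics give $\dot x(t)=A^m_{i\hat i}x(t)+D_i w(t)$, so that $x(t+\Delta)=x(t)+\Delta\bigl(A^m_{i\hat i}x(t)+D_i w(t)\bigr)+o(\Delta)$ and, for every $j\in\mathcal{S}$,
\begin{equation*}
x^{\rm T}(t+\Delta)P_j x(t+\Delta)=x^{\rm T}P_j x+\Delta\bigl[(A^m_{i\hat i}x+D_i w)^{\rm T}P_j x+x^{\rm T}P_j(A^m_{i\hat i}x+D_i w)\bigr]+o(\Delta).
\end{equation*}
By Assumption \ref{assump1}, the conditional law of $\theta(t+\Delta)$ given $(\hat\theta(t),\theta(t),x(t))$ collapses to the transition rule \eqref{transition}: mode $i$ is retained with probability $1+\gamma_{ii}^m\Delta+o(\Delta)$ and any mode $j\ne i$ is reached with probability $\gamma_{ij}^m\Delta+o(\Delta)$.

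Next I would multiply these factors together and keep only the $O(\Delta)$ terms; since there are finitely many branches, the pooled remainder is still $o(\Delta)$. The $j=i$ branch contributes the first-order state-derivative term together with $\gamma_{ii}^m\Delta\,x^{\rm T}P_i x$, while each $j\ne i$ branch contributes only $\gamma_{ij}^m\Delta\,x^{\rm T}P_j x$. Summing over $\hat i$ and using $\sum_{\hat i\in\mathcal{S}}\alpha_{i\hat i}^m=1$, the $\hat i$-independent pieces pass through unchanged while $\sum_{\hat i}\alpha_{i\hat i}^m A^m_{i\hat i}=\bar{A}_i^m$ by definition; dividing by $\Delta$ and letting $\Delta\to 0$ then yields
\begin{equation*}
\mathcal{L}V=(\bar{A}_i^m x+D_i w)^{\rm T}P_i x+x^{\rm T}P_i(\bar{A}_i^m x+D_i w)+\sum_{j\in\mathcal{S}}\gamma_{ij}^m\,x^{\rm T}P_j x.
\end{equation*}
Regrouping the quadratic terms and using the scalar identity $w^{\rm T}D_i^{\rm T}P_i x=x^{\rm T}P_i D_i w$, valid because $P_i=P_i^{\rm T}$, produces exactly the claimed expression.

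I expect the only delicate point to be bookkeeping: one must check that the product of the observation weights, the transition probabilities, and the Taylor expansion of the quadratic form leaves a genuine $o(\Delta)$ remainder after the finite summation, and one must invoke Assumption \ref{assump1} precisely at the step where the rates $\gamma_{ij}^m$ are introduced, so that they may be used independently of the realized observation $\hat\theta(t)$. Everything else is a routine first-order expansion combined with the symmetry of $P_i$.
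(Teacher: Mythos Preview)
Your proposal is correct and follows essentially the same approach as the paper: Euler-discretize the closed-loop state, expand the quadratic Lyapunov function to first order in $\Delta$, condition on $\hat\theta(t)$ via the tower property (invoking Assumption~\ref{assump1} so that the rates $\gamma_{ij}^m$ enter independently of $\hat\theta$), sum over observations using $\sum_{\hat i}\alpha_{i\hat i}^m=1$ and the definition of $\bar A_i^m$, and pass to the limit. The paper's proof is terser and simply cites ``tower property and Assumption~\ref{assump1}'' for the key step, whereas you spell out the branch-by-branch bookkeeping more carefully, but the underlying argument is the same.
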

\begin{proof}
See Appendix \ref{appndix1}.
\end{proof}
}

The following theorem gives a sufficient condition that ensures the stochastic stability of the integrated MJLS under partial information.

\begin{theorem}\label{thm21}
The system \eqref{sys1} can be stochastically stabilized if there exist positive definite matrices $X_i$, $Y^{m}_i$, for all $i\in{\mathcal{S}}$, $m\in\mathcal{M}$, and $\kappa_i>0$, satisfying
\begin{equation}\label{thm1}
\begin{aligned}
X_i {A}_{i}^{{\rm T}} &+ Y^{m{\rm T}}_i{B}_{i}^{\rm T} + {A}_{i} X_i + {B}_iY^{m}_i+ \gamma_{ii}^{m}X_i \\
&+ X_i \left(\sum_{j\in {\mathcal{S}}/\{i\}} \gamma_{ij}^{m}X_j^{-1}\right)X_i + \frac{1}{\kappa_i}{D}_{i}^{\rm T}{D}_i  < 0.
\end{aligned}
\end{equation}
By Schur complement lemma \cite{c20}, \eqref{thm1} is equivalent to
\begin{equation}
\begin{bmatrix}
\mathcal{E}_i^{m}& \Lambda_i^{m}\\
\star & -\mathcal{X}_i
\end{bmatrix}< 0,
\end{equation}
where 
\begin{equation*}
    \begin{aligned}
    \mathcal{E}_i^{m}&:=X_i \bar{A}_{i}^{m{\rm T}} + Y^{m{\rm T}}_i{B}_{i}^{\rm T} + \bar{A}_i^m X_i + {B}_i Y_i^{m} + \gamma_{ii}^{m}X_i + ({1}/{\kappa_i}){D}_i^{{\rm T}}{D}_i,\\
    \Lambda_i^{m} &:= [\sqrt{\gamma_{i1}^{m} }X_i,...,\sqrt{\gamma_{i(i-1)}^{m} }X_i,\sqrt{\gamma_{i(i+1)}^{m} }X_i,...,
    \sqrt{\gamma_{i|{\mathcal{S}}|}^{m} }X_i],\\
    \mathcal{X}_i& := \mathrm{diag}\{X_1,...,X_{i-1},X_{i+1},...,X_{|{\mathcal{S}}|}\}.
    \end{aligned}
\end{equation*}
The control gain is given by 
${G}^{m}_{\hat{i}} =\sum_{i\in{\mathcal{S}}}\beta^{m}_{\hat{i}i}Y^{m}_iX_i^{-1}.$ 
\end{theorem}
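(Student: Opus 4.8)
The plan is to certify stochastic stability by a Lyapunov argument built on Lemma~\ref{simplify} and Dynkin's formula, with the LMI \eqref{thm1} serving as the algebraic certificate obtained after a congruence transformation. First I would take the quadratic Lyapunov function $V(x,\theta)=x^{\rm T}P_\theta x$ with $P_i:=X_i^{-1}\succ0$, close the loop with the proposed gain $G^m_{\hat i}=\sum_{i\in\mathcal S}\beta^m_{\hat ii}Y^m_iX_i^{-1}$, and simplify the observation-averaged closed-loop matrix $\bar A^m_i=\sum_{\hat i\in\mathcal S}\alpha^m_{i\hat i}A^m_{i\hat i}$. Since $A^m_{i\hat i}=A_i+B_iG^m_{\hat i}$ and $\sum_{\hat i}\alpha^m_{i\hat i}=1$, and since by \eqref{beta_hat} the array $[\beta^m_{\hat ii}]$ is the (pseudo-)inverse of $[\alpha^m_{i\hat i}]$ so that $\sum_{\hat i}\alpha^m_{i\hat i}\beta^m_{\hat ii'}=\delta_{ii'}$, one obtains the key collapse $\bar A^m_i=A_i+B_iY^m_iX_i^{-1}$; this is exactly what makes the observation-based gain behave, under the averaging, like an ordinary mode-dependent gain.

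Next, I would substitute $\bar A^m_i$ into the expression for $\mathcal LV$ supplied by Lemma~\ref{simplify} and bound the disturbance cross term by Young's inequality, $2x^{\rm T}P_iD_iw\le\kappa_i^{-1}x^{\rm T}P_iD_iD_i^{\rm T}P_ix+\kappa_iw^{\rm T}w$, to get $\mathcal LV\le x^{\rm T}\Xi^m_ix+\kappa_iw^{\rm T}w$ with $\Xi^m_i:=P_i\bar A^m_i+\bar A^{m{\rm T}}_iP_i+\sum_{j\in\mathcal S}\gamma^m_{ij}P_j+\kappa_i^{-1}P_iD_iD_i^{\rm T}P_i$. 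Performing the congruence transformation $X_i\Xi^m_iX_i$ (using $X_i=P_i^{-1}$ and $\bar A^m_iX_i=A_iX_i+B_iY^m_i$) reproduces precisely the left-hand side of \eqref{thm1}; hence the hypothesis is equivalent to $\Xi^m_i\prec0$ for every $i\in\mathcal S$ and $m\in\mathcal M$. Because $\mathcal S$ and $\mathcal M$ are finite, there is a uniform $c>0$ with $\Xi^m_i\preceq-cI$, so $\mathcal LV\le-c|x|^2+\bar\kappa\,w^{\rm T}w$ with $\bar\kappa:=\max_{i\in\mathcal S}\kappa_i$. The Schur-complement reformulation is then just the identity $X_i\big(\sum_{j\ne i}\gamma^m_{ij}X_j^{-1}\big)X_i=\Lambda^m_i\mathcal X_i^{-1}\Lambda^{m{\rm T}}_i$, which by the Schur complement lemma turns \eqref{thm1} into the displayed block LMI, making the condition numerically tractable.

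Finally, I would invoke the Dynkin formula \eqref{dynkin}: the telescoping right-hand side collapses to $\mathbb E\big[\int_{t_0}^t\mathcal LV(x(\upsilon),\theta(\upsilon))\,d\upsilon\mid x(t_0),\theta(t_0)\big]$, so that $\mathbb E[V(x(t),\theta(t))\mid x(t_0),\theta(t_0)]-V(x(t_0),\theta(t_0))\le-c\,\mathbb E\big[\int_{t_0}^t|x(\upsilon)|^2d\upsilon\big]+\bar\kappa\int_{t_0}^tw^{\rm T}w\,d\upsilon$. Using $V\ge0$ and $\int_{t_0}^\infty w^{\rm T}w\,d\upsilon<\infty$, rearranging and letting $t\to\infty$ gives $\mathbb E\big[\int_{t_0}^\infty|x(\upsilon)|^2d\upsilon\big]\le\frac1c\big(V(x(t_0),\theta(t_0))+\bar\kappa\int_{t_0}^\infty w^{\rm T}w\,d\upsilon\big)<\infty$, which is stochastic stability. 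The step I expect to require the most care is the collapse $\bar A^m_i=A_i+B_iY^m_iX_i^{-1}$ together with the congruence transformation: one must track the averaging over the emitted observations, the row/column conventions of the (pseudo-)inverse $[\beta^m_{\hat ii}]$, and the fact that $Y^m_i$ encodes $\big(\sum_{\hat i}\alpha^m_{i\hat i}G^m_{\hat i}\big)X_i$ rather than a single gain, so that the nonconvex cross terms line up exactly with \eqref{thm1} before the Schur complement is applied.
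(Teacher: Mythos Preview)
Your proposal is correct and follows essentially the same route as the paper's own proof: Lemma~\ref{simplify} plus the Young-type bound on the disturbance cross term, the congruence transformation $X_i\Xi^m_iX_i$ with $X_i=P_i^{-1}$ and $Y^m_i=\bar G^m_iX_i$ to reach \eqref{thm1}, and then Dynkin's formula \eqref{dynkin} telescoped over the cluster stopping times to conclude stochastic stability. Your treatment is in fact slightly tidier in two places---you make the collapse $\bar A^m_i=A_i+B_iY^m_iX_i^{-1}$ explicit via $\sum_{\hat i}\alpha^m_{i\hat i}\beta^m_{\hat ii'}=\delta_{ii'}$, and you replace the mode-dependent $\kappa_i$ by $\bar\kappa=\max_i\kappa_i$ before integrating---but the architecture is identical.
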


\begin{proof} 
Based on the results from \cite{c19}, we obtain, for any $\kappa_i>0$, $i\in{\mathcal{S}}$,
\begin{equation*}
    2{x}^{\rm T}(t) P_i {D}_i{w}(t) \leq 
\frac{1}{\kappa_i} {x}^{\rm T}(t) P_i {D}_i {D}^{{\rm T}}_iP_i {x}(t) + \kappa_i {w}^{\rm T}(t){w}(t).
\end{equation*}
Since $P_i$ is symmetric for all $i\in{\mathcal{S}}$, we have:
\begin{equation}\label{lvsmall2}
    \begin{aligned}
    &\mathcal{L}V({x}(t),\theta(t)) \\
    & = {x}^{\rm T}(t) ( P_i\bar{A}^{m}_i +\bar{A}^{m{\rm T}}_iP_i +  \sum_{j\in{\mathcal{S}}} \gamma^{m}_{ij} P_j){x}(t)+ 2{x}^{\rm T}(t) P_i {D}_i{w}(t)\\
     & \leq {x}^{\rm T}(t) ( P_i\bar{A}^{m}_i + \bar{A}^{m{\rm T}}_iP_i +  \sum_{j\in{\mathcal{S}}} \gamma^{m}_{ij} P_j){x}(t)\\
  &\qquad\qquad\qquad+\frac{1}{\kappa_i} {x}^{\rm T}(t) P_i {D}_i {D}_i^{{\rm T}} P_i {x}(t) + \kappa_i {w}^{\rm T}(t){w}(t)\\
  & = {x}^{\rm T}(t) \big( P_i\bar{A}^{m}_i + \bar{A}^{m{\rm T}}_iP_i +  \sum_{j\in{\mathcal{S}}} \gamma^{m}_{ij} P_j + \frac{1}{\kappa_i}  P_i {D}_i {D}_i^{{\rm T} }P_i\big)x(t) \\
  &\qquad\qquad\qquad\qquad\qquad\qquad\qquad\quad + \kappa_i {w}^{\rm T}(t){w}(t)\\
  & = {x}^{\rm T}(t)\Psi^{m}_i{x}(t)+ \kappa_i {w}^{\rm T}(t){w}(t),
    \end{aligned}
\end{equation}
where $\Psi^{m}_i:= P_i\bar{A}^{m}_i + \bar{A}^{m{\rm T}}_iP_i +  \sum_{j\in{\mathcal{S}}} \gamma^{m}_{ij} P_j + ({1}/{\kappa_i})  P_i {D}_i {{D}_i^{{\rm T}}} P_i$. 
Then, 
\begin{equation}\label{lvsmall}
\begin{aligned}
    \mathcal{L}V({x}(t),\theta(t))- \kappa_i {w}^{\rm T}(t){w}(t) 
    &\leq {x}^{\rm T}(t) \Psi^{m}_i{x}(t)\\
    &\leq r_{\sigma}(\Psi^{m}_i) {x}^{\rm T}(t){x}(t),
\end{aligned}
\end{equation}
{where $r_{\sigma}(\cdot)$ denotes the largest eigenvalue of a matrix.}

Next, we use the LMI technique from \cite{c20}, and choose $X_i=P_i^{-1}$. {By pre- and post- multiplying $P_i\bar{A}_i^{m}+\bar{A}_i^{m{\rm T}}P_i+\sum_{j\in{\mathcal{S}}}\gamma^{m}_{ij}P_j+ ({1}/{\kappa_i})  P_i {D}_i {{D}_i^{{\rm T}}} P_i$ with $X_i$ and setting $Y^{m}_i=\bar{G}_{i}^{m}X_i$, where $\bar{G}_{i}^{m}:=\sum_{\hat{i}\in{\mathcal{S}}}\alpha^{m}_{i\hat{i}}{G}_{\hat{i}}^m$, we observe that if \eqref{thm1} holds, then $\Psi^{m}_i<0$.}

By using Dynkin's formula \eqref{dynkin}, and for any $x(t_0)\in\mathcal{C}^{m_0}$, letting $\{m_0,m_1,...\}$ be the successive clusters of states visited, we obtain
\begin{equation*}
    \begin{aligned}
    &\mathbb{E}[V(x(t),\theta(t))|x(t_0),\theta(t_0)]-V(x(t_0),\theta(t_0))\\
    &= \mathbb{E}\big[  \int_{t_0}^{\tau_{0}}\mathcal{L}V(x(\upsilon),\theta(\upsilon))d\upsilon|x(t_0),\theta(t_0) \big]\\
    &\quad  + \mathbb{E}\big[  \int_{\tau_0}^{\tau_{1}}\mathcal{L}V(x(\upsilon),\theta(\upsilon))d\upsilon|x(\tau_0),\theta(\tau_0)\big]\\
    &\quad  + ...+ \mathbb{E}\big[  \int_{t\wedge\tau_{l^*}}^{t\wedge\tau_{l^*+1}}\mathcal{L}V(x(\upsilon),\theta(\upsilon))d\upsilon|x(t\wedge\tau_{l^*}),\theta(t\wedge\tau_{l^*})\big].
    \end{aligned}
\end{equation*}
Therefore, \eqref{lvsmall2} and \eqref{lvsmall} lead to
\begin{equation*}
    \begin{aligned}
        &\mathbb{E}[V(x(t),\theta(t))|x(t_0),\theta(t_0)]-V(x(t_0),\theta(t_0))\\
        &\leq \max\{r_{\sigma}(\Psi^{m}_i)\}\mathbb{E}\big[  \int_{t_{0}}^{t} x^{\rm T}(\upsilon)x(\upsilon) d\upsilon \big]\\
        &\qquad\qquad\qquad\qquad\qquad\qquad\qquad + \kappa_i \int_{t_{0}}^{t} w^{\rm T}(\upsilon)w(\upsilon) d\upsilon.
    \end{aligned}
\end{equation*}
Hence,
\begin{equation}\label{bound}
    \begin{aligned}
    &-\max\{ r_{\sigma}(\Psi^{m}_i) \}\mathbb{E}\big[  \int_{t_{0}}^{t} x^{\rm T}(\upsilon)x(\upsilon) d\upsilon \big]\\
    &\leq  -\max\{ r_{\sigma}(\Psi^{m}_i) \}\mathbb{E}\big[  \int_{t_{0}}^{t} x^{\rm T}(\upsilon)x(\upsilon) d\upsilon \big]\\
    &\qquad\qquad\qquad\qquad\qquad\qquad+ \mathbb{E}[V(x(t),\theta(t))|x(t_0),\theta(t_0)]\\
     &\leq V(x(t_0),\theta(t_0))+\kappa_i \int_{t_{0}}^{t} w^{\rm T}(\upsilon)w(\upsilon) d\upsilon,
    \end{aligned}
\end{equation}
which leads to
\begin{equation*}
    \begin{aligned}
        &\mathbb{E}\big[  \int_{t_{0}}^{t} x^{\rm T}(\upsilon)x(\upsilon) d\upsilon \big]\\
        &\qquad\qquad\quad\qquad\leq \frac{ V(x(t_0),\theta(t_0))+\kappa_i \int_{t_{0}}^{t} w^{\rm T}(\upsilon)w(\upsilon) d\upsilon}{-\max\{ r_{\sigma}(\Psi^{m}_i) \}}.
    \end{aligned}
\end{equation*}
Letting $t\rightarrow \infty$ implies that $\mathbb{E}\big[  \int_{t_{0}}^{\infty} x^{\rm T}(\upsilon)x(\upsilon) d\upsilon \big]$
is bounded by the right hand side of \eqref{bound}. Therefore, the system is stochastically stable if \eqref{thm1} holds.
\end{proof}

In the case of full information where system's mode $\theta(t)$ is observable, we immediately have the following proposition.

\begin{proposition}
The system can be stochastically stabilized if there exist positive definite matrices $X_i$, $Y_i$, for all $i\in\mathcal{S}$, $m\in\mathcal{M}$, and $\kappa_i>0$, satisfying
\begin{equation}\label{prop}
\begin{aligned}
X_i {A}_i^{{\rm T}} &+ Y_i^{\rm T}{B}_i^{{\rm T}} + {A}_i X_i + {B}_i Y_i + \gamma_{ii}^{m}X_i \\
&+ X_i \left(\sum_{j\in {\mathcal{S}}/\{i\}} \gamma_{ij}^{m}X_j^{-1}\right)X_i + \frac{1}{\kappa_i}{D}_i^{{\rm T}}{D}_i  < 0.
\end{aligned}
\end{equation}
By using Schur complement lemma \cite{c20}, \eqref{prop} is equivalent to
\begin{equation}
\begin{bmatrix}
\mathcal{E}_i^{m}& \Lambda_i^{m}\\
\star & -\mathcal{X}_i
\end{bmatrix}< 0,
\end{equation}
where 
\begin{equation*}
\begin{aligned}
    \mathcal{E}_i^{m}:&=X_i {A}_i^{\rm T} + Y_i^{\rm T}{B}_i^{\rm T} + {A}_i X_i + B_i Y^m_i + \gamma_{ii}^{m}X_i + ({1}/{\kappa_i}){D}_i^{{\rm T}}{D}_i,\\
    \Lambda_i^{m}:&= [\sqrt{\gamma_{i1}^{m} }X_i,...,\sqrt{\gamma_{i(i-1)}^{m} }X_i,\sqrt{\gamma_{i(i+1)}^{m} }X_i,...,
    \sqrt{\gamma_{i|{\mathcal{S}}|}^{m} }X_i],\\
    \mathcal{X}_i:&= \mathrm{diag}\{X_1,...,X_{i-1},X_{i+1},...,X_{|{\mathcal{S}}|}\}.
\end{aligned}
\end{equation*}
Then, when the system mode is $i$, the control gain of the system is given by 
$\tilde{G}_i= Y_iX_i^{-1}$.
\end{proposition}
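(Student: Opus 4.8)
The plan is to specialize the argument of Theorem~\ref{thm21} to the full-information setting $\hat{\theta}(t)=\theta(t)$, in which the observation layer collapses. First I would note that with the mode directly observed the feedback is $u(t)=\tilde{G}_{\theta(t)}x(t)$, so the observation probabilities degenerate to $\alpha_{i\hat{i}}^{m}=\mathbf{1}_{\{i=\hat{i}\}}$ and hence, in the notation of Lemma~\ref{simplify}, $\bar{A}_i^{m}=\sum_{\hat{i}\in\mathcal{S}}\alpha_{i\hat{i}}^{m}A_{i\hat{i}}^{m}=A_i+B_i\tilde{G}_i$. Substituting this into Lemma~\ref{simplify} gives, for $x(t)\in\mathcal{C}^m$, $\theta(t)=i$,
\begin{equation*}
\begin{aligned}
\mathcal{L}V(x(t),\theta(t)) &= x^{\rm T}(t)\Big(P_i(A_i+B_i\tilde{G}_i)+(A_i+B_i\tilde{G}_i)^{\rm T}P_i\\
&\qquad +\sum_{j\in\mathcal{S}}\gamma_{ij}^{m}P_j\Big)x(t)+2x^{\rm T}(t)P_iD_iw(t).
\end{aligned}
\end{equation*}

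Second, I would reuse the disturbance-decoupling bound from \cite{c19}, $2x^{\rm T}(t)P_iD_iw(t)\le\tfrac{1}{\kappa_i}x^{\rm T}(t)P_iD_iD_i^{\rm T}P_ix(t)+\kappa_i w^{\rm T}(t)w(t)$ for any $\kappa_i>0$, to obtain $\mathcal{L}V(x(t),\theta(t))-\kappa_i w^{\rm T}(t)w(t)\le x^{\rm T}(t)\Psi_i^{m}x(t)\le r_{\sigma}(\Psi_i^{m})\,x^{\rm T}(t)x(t)$, where $\Psi_i^{m}:=P_i(A_i+B_i\tilde{G}_i)+(A_i+B_i\tilde{G}_i)^{\rm T}P_i+\sum_{j\in\mathcal{S}}\gamma_{ij}^{m}P_j+\tfrac{1}{\kappa_i}P_iD_iD_i^{\rm T}P_i$. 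The whole point is then to certify $\Psi_i^{m}<0$ for every $i\in\mathcal{S}$ and every $m\in\mathcal{M}$.

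Third comes the LMI conversion, as in \cite{c20}: setting $X_i=P_i^{-1}$ with the change of variable $Y_i=\tilde{G}_iX_i$ (equivalently $\tilde{G}_i=Y_iX_i^{-1}$), pre- and post-multiplying $\Psi_i^{m}$ by $X_i$, splitting off the $j=i$ term as $\gamma_{ii}^{m}X_i$, and rewriting $X_i\big(\sum_{j\in\mathcal{S}\setminus\{i\}}\gamma_{ij}^{m}P_j\big)X_i=X_i\big(\sum_{j\in\mathcal{S}\setminus\{i\}}\gamma_{ij}^{m}X_j^{-1}\big)X_i$ turns $\Psi_i^{m}<0$ into \eqref{prop}; a Schur complement on the block that is quadratic in $X_i$ yields the stated block-matrix inequality with $\Lambda_i^{m}$ and $\mathcal{X}_i$. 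Unlike Theorem~\ref{thm21}, no (pseudo-)inverse of an observation matrix enters here, so the correspondence $\tilde{G}_i\leftrightarrow Y_i$ is a genuine bijection and the recovered gain is simply $\tilde{G}_i=Y_iX_i^{-1}$.

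Finally I would close exactly as in the proof of Theorem~\ref{thm21}: apply Dynkin's formula \eqref{dynkin} over the successive sojourn clusters $\{m_0,m_1,\dots\}$ of the trajectory, telescope the resulting sum, and combine $\mathbb{E}[V(x(t),\theta(t))\,|\,x(t_0),\theta(t_0)]\ge 0$ with $\max_{i\in\mathcal{S},\,m\in\mathcal{M}}r_{\sigma}(\Psi_i^{m})<0$ and $\int_{t_0}^{\infty}w^{\rm T}(t)w(t)\,dt<\infty$ to bound $\mathbb{E}\big[\int_{t_0}^{\infty}x^{\rm T}(t)x(t)\,dt\big]$ by a finite constant, i.e.\ stochastic stability. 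I do not expect a genuine obstacle beyond what is already handled in Theorem~\ref{thm21}; the only point needing a line of care is that the state passes through a sequence of clusters $\mathcal{C}^m$, so \eqref{prop} must be imposed for all $m\in\mathcal{M}$ simultaneously and the stability constant then relies on the maximum of the finitely many negative scalars $r_{\sigma}(\Psi_i^{m})$ being strictly negative.
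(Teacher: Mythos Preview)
Your proposal is correct and follows the same approach as the paper: observe that full information means $\alpha_{i\hat{i}}^{m}=\mathbf{1}_{\{i=\hat{i}\}}$, which collapses $\bar{A}_i^{m}$ to $A_i+B_i\tilde{G}_i$ and removes the observation-inverse from the gain formula, after which the result is an immediate specialization of Theorem~\ref{thm21}. The paper's own proof is a two-line sketch stating exactly this reduction; you have simply spelled out the intermediate steps that the sketch leaves implicit.
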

\begin{hproof}
Note that in the fully observable case, for all $m\in\mathcal{M}$, we obtain
\begin{equation*}
    \alpha^{m}_{i\hat{i}}=\begin{cases}
    1 & \text{when $i=\hat{i}$},\\
    0 & \text{otherwise}.
    \end{cases}
\end{equation*}
Then the proposition is an immediate result from Theorem \ref{thm21}. 
\end{hproof}

\section{Distributed Stabilization of the Interdependent MJLSs}\label{individualsystem}
In this section, We focus on \eqref{two_systems} which includes two interdependent MJLSs.  In Section \ref{overallsystem}, we have studied the stability of the integrated MJLS which requires to know global system's state information. However, due to the distributed structure and different types of the jump systems, obtaining the overall system's information is not always possible/convenient. Thus, to enable the distributed control of the interdependent Markov jump systems, we aim to investigate the criterion that leads to the stochastic stability of each individual system in this section.

{Similar to \eqref{beta_hat} and \eqref{beta_hat2}, for $k=1,2,$ when $[\alpha^{k,m_k}_{i_k\hat{i}_k}]_{i_k,\hat{i}_k\in{\mathcal{S}_k}}$ is invertible, we define
\begin{equation*}
[\beta^{k,m_k}_{\hat{i}_ki_k}]_{\hat{i}_k,i_k\in\mathcal{S}_k}=\left([\alpha^{k,m_k}_{i_k\hat{i}_k}]_{i_k,\hat{i}_k\in\mathcal{S}_k}\right)^{-1}.
\end{equation*}
Otherwise,
\begin{equation*}
[\beta^{k,m_k}_{\hat{i}_ki_k}]_{\hat{i}_k,i_k\in\mathcal{S}_k}=\left([\alpha^{k,m_k}_{i_k\hat{i}_k}]_{i_k,\hat{i}_k\in\mathcal{S}_k}\right)^{\dagger}.
\end{equation*}}

Furthermore, similar to \eqref{controllor}, when $x_1(t)\in\mathcal{C}_1^{m_1}$,  $x_2(t)\in\mathcal{C}_2^{m_2}$, the controllers for the two interdependent MJLSs are given by the following state-feedback form:
\begin{equation}
u_k(t)={G^{m_1,m_2}_{k,\hat{\theta}_k(t)}}x_k(t),\quad k=1,2.
\end{equation}
That is, the control gain of System $k$ is dependent on the observation $\hat{\theta}_k(t)$ and the state pair $(x_1(t),x_2(t))$.

Before we proceeding to the main result of this section, we give the following corollary, which presents how the individual stabilizing control of each system can lead to a stable integrated system.
\begin{corollary}\label{coro1}
The stochastic stability of both MJLSs ensures a stochastically stable integrated system. In addition, for $x_1\in\mathcal{C}_1^{m_1}$, $m_1\in\mathcal{M}_1$, and $x_2\in\mathcal{C}_2^{m_2}$, $m_2\in\mathcal{M}_2$,
the stabilizing control $G^{m_1,m_2}_{2,\hat{i}_2}$ and $G^{m_1,m_2}_{2,\hat{i}_2}$, for all $\hat{i}_1\in{\mathcal{S}}_1$ and $\hat{i}_2\in{\mathcal{S}}_2$, of individual System 1 and System 2 lead to a stable integrated interdependent MJLS \eqref{sys_inte12}.
\end{corollary}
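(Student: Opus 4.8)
The approach I would take rests on the block‑diagonal structure of the integrated plant \eqref{sys_inte12} together with an additivity property of the stochastic‑stability functional. The first assertion --- that stochastic stability of both subsystems forces stochastic stability of the integrated MJLS --- I would prove as follows. \textbf{Step 1:} Since $x=[x_1^{\rm T},x_2^{\rm T}]^{\rm T}$, we have the pointwise identity $|x(t)|^2=|x_1(t)|^2+|x_2(t)|^2$. \textbf{Step 2:} The integrands being nonnegative and measurable, Tonelli's theorem and linearity of the expectation give $\mathbb{E}\big[\int_{t_0}^\infty|x(t)|^2dt\big]=\mathbb{E}\big[\int_{t_0}^\infty|x_1(t)|^2dt\big]+\mathbb{E}\big[\int_{t_0}^\infty|x_2(t)|^2dt\big]$. \textbf{Step 3:} An arbitrary integrated initialization $(x(t_0),\theta(t_0))\in\mathbb{R}^{N_x}\times\mathcal{S}$ corresponds, under $\mathbb{R}^{N_x}\cong\mathbb{R}^{N_{1,x}}\times\mathbb{R}^{N_{2,x}}$ and $\mathcal{S}\cong\mathcal{S}_1\times\mathcal{S}_2$, to an arbitrary pair of subsystem initializations; hence if each subsystem is stochastically stable (each term on the right is finite for all such initializations) then the left‑hand side is finite for every $(x(t_0),\theta(t_0))$, which is exactly the definition applied to \eqref{sys1}.

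For the second assertion, I would first show that the distributed controllers $G^{m_1,m_2}_{k,\hat\theta_k}$ make each closed‑loop subsystem stochastically stable and then invoke Steps 1--3. Concretely, repeating the argument behind Theorem~\ref{thm21} for System~$k$ alone --- treating the cluster index $m_{k'}$ of the other system ($k'\neq k$) as an exogenous parameter and choosing the gains so that the associated LMI holds for \emph{every} pair $(m_1,m_2)$ --- produces a quadratic Lyapunov function for System~$k$ whose infinitesimal generator is negative definite regardless of which cluster $x_{k'}(t)$ currently occupies. Integrating along the switching times $\tau_0,\tau_1,\dots$ exactly as in the proof of Theorem~\ref{thm21} then yields a finite bound on $\mathbb{E}\big[\int_{t_0}^\infty|x_k(t)|^2dt\big]$, so each subsystem is stochastically stable, and the first part of the corollary finishes the job.

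A cleaner, certificate‑level route to the second assertion --- and the one I would ultimately write out --- is to compose the individual stability certificates directly. Take $P_i=\mathrm{diag}(P^1_{i_1},P^2_{i_2})$ block diagonal, with correspondingly block‑diagonal gains $\mathrm{diag}(G^{m_1,m_2}_{1,\hat i_1},G^{m_1,m_2}_{2,\hat i_2})$ and closed‑loop matrices $\bar A_i^m$. Because the integrated generator inherits the tensor‑sum structure $\gamma^{(m_1,m_2)}_{(i_1,i_2)(j_1,j_2)}=\lambda^{m_2}_{i_1j_1}\mathbf{1}_{\{i_2=j_2\}}+\mu^{m_1}_{i_2j_2}\mathbf{1}_{\{i_1=j_1\}}$ dictated by \eqref{transition1}--\eqref{transition2}, and because $\sum_{j}\gamma^m_{ij}=0$, the coupling term $\sum_{j\in\mathcal{S}}\gamma^m_{ij}P_j$ is itself block diagonal, equal to $\mathrm{diag}\big(\sum_{j_1}\lambda^{m_2}_{i_1j_1}P^1_{j_1},\ \sum_{j_2}\mu^{m_1}_{i_2j_2}P^2_{j_2}\big)$. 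Consequently the integrated LMI \eqref{thm1} of Theorem~\ref{thm21} decouples into the two per‑subsystem LMIs, so feasibility of the individual designs implies feasibility of the integrated one and Theorem~\ref{thm21} concludes.

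The main obstacle is the point glossed over in the second paragraph: one must check that ``stochastic stability of System~$k$'' is even well posed, since System~$k$'s mode‑transition rates are driven by the trajectory of System~$k'$, which is random and, a priori, regulated only in the mean‑square‑integral sense. What rescues the argument is the finiteness of the partitions $\{\mathcal{C}_{k'}^{m_{k'}}\}$: the driving rates range over a finite set, so imposing the controller LMI for all cluster pairs makes the Lyapunov decrease pathwise uniform over the behaviour of the other system, which decouples the two stability analyses. The certificate‑level computation in the third paragraph is precisely the explicit form of this decoupling, and is what I would present as the proof.
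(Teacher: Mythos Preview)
Your proposal is correct, and your third paragraph---the block-diagonal Lyapunov certificate $P_i=\mathrm{diag}(P^1_{i_1},P^2_{i_2})$ together with the tensor-sum decoupling of $\sum_j\gamma^m_{ij}P_j$---is exactly the mechanism the paper uses: its proof computes $\mathcal{L}V$ for this $V$ and observes that it splits into the two per-subsystem quadratic forms (the paper's $\Psi_{1,i_1}^{m_1,m_2}$ and $\Psi_{2,i_2}^{m_1,m_2}$), then runs the Dynkin-and-bound argument from Theorem~\ref{thm21} once more on the stacked state. Two differences are worth noting. First, for the opening assertion you give the one-line additivity argument $\mathbb{E}\!\int|x|^2=\mathbb{E}\!\int|x_1|^2+\mathbb{E}\!\int|x_2|^2$, whereas the paper states the same hypothesis and goal but then immediately passes to the Lyapunov computation; your Steps~1--3 are strictly simpler and suffice. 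Second, you close by observing that the decoupled blocks verify the integrated LMI \eqref{thm1} and invoking Theorem~\ref{thm21}, while the paper instead re-derives the Dynkin bound by hand for the block-diagonal $V$; your route is shorter and reuses existing machinery, the paper's is more self-contained. Your fourth paragraph correctly isolates the only real subtlety---that ``stochastic stability of System~$k$'' must be made uniform over the other system's cluster index---and the paper handles it the same way, by requiring $\Psi_{k,i_k}^{m_1,m_2}<0$ for every pair $(m_1,m_2)$.
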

\begin{proof}
First, at time $t$, suppose $(\theta_1(t),\theta_2(t))=(i_1,i_2)$ and $(x_1(t),x_2(t))\in\mathcal{C}_1^{m_1}\times \mathcal{C}_2^{m_2}$. Recall that the individual stabilizing controller of one subsystem is designed by considering all the possible states of the other system. In addition, the two systems satisfy
\begin{equation*}
    \mathbb{E}\int_{t_0}^\infty |x_1(t)|^2dt<\infty,\ 
\mathrm{and}\ \mathbb{E}\int_{t_0}^\infty |x_2(t)|^2dt<\infty.
\end{equation*}
Our goal is to show
$
\mathbb{E}\int_{t_0}^\infty |{x}(t)|^2dt<\infty.
$
First, let the Lyapunov function be of the following form:
\begin{equation*}
    \begin{aligned}
        V({x}(t),\theta(t)) &= {x}^{\rm T}(t) \begin{bmatrix}
        P_{1,i_1}& 0\\
        0 & P_{2,i_2}
        \end{bmatrix} {x}(t)\\
        &={x}_1^{\rm T}(t) P_{1,i_1} {x_1}(t) + {x}_2^{\rm T}(t) P_{2,i_2} {x_2}(t),
    \end{aligned}
\end{equation*}
where $P_{1,i_1}\in\mathbb{R}^{N_{1,x}\times N_{1,x}}$ and $P_{2,i_2}\in\mathbb{R}^{N_{2,x}\times N_{2,x}}$ are real, symmetric and positive definite matrices. Based on Lemma \ref{simplify}, the infinitesimal generator of $V$ is equal to
\begin{equation}\label{interineq}
    \begin{aligned}
    &\mathcal{L}V({x}(t),\theta(t)) \\
    & = {x}_1^{\rm T}(t) \left( P_{1,i_1}\bar{A}^{m_1,m_2}_{1,i_1} + \bar{A}^{m_1,m_2\rm T}_{1,i_1}P_{1,i_1} +  \sum_{j_1\in\mathcal{S}_1} \lambda^{m_2}_{{i_1}{j_1}} P_{1,j_1}\right){x}_1(t)\\
    &\quad+ {x}_2^{\rm T}(t)\left( P_{2,i_2}\bar{A}^{m_1,m_2}_{2,i_2} + \bar{A}^{m_1,m_2\rm T}_{2,i_2}P_{2,i_2} +  \sum_{j_2\in\mathcal{S}_2} \mu^{m_1}_{{i_2}{j_2}} P_{2,j_2}\right){x}_2(t),
    \end{aligned}
\end{equation}
where
$
\bar{A}_{1,i_1}^{m_1,m_2}=\sum_{\hat{i}_1\in{\mathcal{S}}_1}\alpha^{1,m_1}_{i_1\hat{i}_1}\left(A_{1,i_1}+B_{1,i_1}G^{m_1,m_2}_{1,\hat{i}_1}\right),
$
and
$
\bar{A}_{2,i_2}^{m_1,m_2}=\sum_{\hat{i}_2\in{\mathcal{S}}_2}\alpha^{2,m_2}_{i_2\hat{i}_2}\left(A_{2,i_2}+B_{2,i_2}G^{m_1,m_2}_{2,\hat{i}_2}\right).
$

Then, by defining $\Psi_{1,i_1}^{m_1,m_2} : =   P_{1,i_1}\bar{A}^{m_1,m_2}_{1,i_1} + \bar{A}^{m_1,m_2\rm T}_{1,i_1}P_{1,i_1} +  \sum_{j_1\in\mathcal{S}_1} \lambda^{m_2}_{{i_1}{j_1}} P_{1,j_1}$ and $\Psi_{2,i_2}^{m_1,m_2}:=P_{2,i_2}\bar{A}^{m_1,m_2}_{2,i_2} + \bar{A}^{m_1,m_2\rm T}_{2,i_2}P_{2,i_2} +  \sum_{j_2\in\mathcal{S}_2} \mu^{m_1}_{{i_2}{j_2}} P_{2,j_2}$ and using the properties $\Psi_{1,i_1}^{m_1,m_2}< 0$ and $\Psi_{2,i_2}^{m_1,m_2}< 0$ for all $m_1\in\mathcal{M}_1, m_2\in\mathcal{M}_2$,  we further obtain
\begin{equation*}
    \begin{aligned}
        \mathbb{E}&\big[V(x(t),\theta(t))|{x}(t_0),\theta(t_0)\big]-V({x}(t_0),\theta(t_0))\\
&\leq\max_{\substack{m_1\in\mathcal{M}_1\\ m_2\in\mathcal{M}_2}}\{ r_\sigma(\Psi_{1,i_1}^{m_1,m_2}) \}\mathbb{E}\big[  \int_{t_{0}}^{t} x_1^T(\upsilon)x_1(\upsilon) d\upsilon \big]\\
&\quad\quad\ +\max_{\substack{m_1\in\mathcal{M}_1\\ m_2\in\mathcal{M}_2}}\{ r_\sigma(\Psi_{2,i_2}^{m_1,m_2}) \}\mathbb{E}\big[  \int_{t_{0}}^{t} x_2^T(\upsilon)x_2(\upsilon) d\upsilon \big].
    \end{aligned}
\end{equation*}
Reorganizing the terms further yields
\begin{align*}
&-\max_{\substack{m_1\in\mathcal{M}_1\\ m_2\in\mathcal{M}_2}}\{r_{\sigma} (\Psi_{1,i_1}^{m_1,m_2}) \}\mathbb{E}\big[  \int_{t_{0}}^{t} x_1^{\rm T}(\upsilon)x_1(\upsilon) d\upsilon \big]\\
&\qquad\qquad-\max_{\substack{m_1\in\mathcal{M}_1\\ m_2\in\mathcal{M}_2}}\{ r_{\sigma}  (\Psi_{2,i_2}^{m_1,m_2}) \}\mathbb{E}\big[  \int_{t_{0}}^{t} x_2^{\rm T}(\upsilon)x_2(\upsilon) d\upsilon \big]\\
&\leq V({x}(t_0),\theta(t_0))-\mathbb{E}\big[V(x(t),\theta(t))|{x}(t_0),\theta(t_0)\big]\\
&\leq V({x}(t_0),\theta(t_0)).
\end{align*}
In addition, we have
\begin{align*}
&\max_{\substack{m_1\in\mathcal{M}_1\\ m_2\in\mathcal{M}_2}}r_{\sigma} (\Psi_{1,i_1}^{m_1,m_2}) \mathbb{E}\big[  \int_{t_{0}}^{t} x_1^T(\upsilon)x_1(\upsilon) d\upsilon \big]\\
&\qquad\quad\quad+\max_{\substack{m_1\in\mathcal{M}_1\\ m_2\in\mathcal{M}_2}}r_{\sigma} (\Psi_{2,i_2}^{m_1,m_2}) \mathbb{E}\big[  \int_{t_{0}}^{t} x_2^T(\upsilon)x_2(\upsilon) d\upsilon \big]\\
& \leq \max \Big\{\max_{\substack{m_1\in\mathcal{M}_1\\ m_2\in\mathcal{M}_2}}r_{\sigma} (\Psi_{1,i_1}^{m_1,m_2}) ,\max_{\substack{m_1\in\mathcal{M}_1\\ m_2\in\mathcal{M}_2}}r_{\sigma} (\Psi_{2,i_2}^{m_1,m_2})  \Big\}\\
& \qquad\quad\cdot \Bigg(\mathbb{E}\big[  \int_{t_{0}}^{t} x_1^{\rm T}(\upsilon)x_1(\upsilon) d\upsilon \big] +  \mathbb{E}\big[  \int_{t_{0}}^{t} x_2^{\rm T}(\upsilon)x_2(\upsilon) d\upsilon \big] \Bigg)\\
& = \max \Big\{\max_{\substack{m_1\in\mathcal{M}_1\\ m_2\in\mathcal{M}_2}}r_{\sigma} (\Psi_{1,i_1}^{m_1,m_2}) ,\max_{\substack{m_1\in\mathcal{M}_1\\ m_2\in\mathcal{M}_2}}r_{\sigma} (\Psi_{2,i_2}^{m_1,m_2})   \Big\}\\
& \qquad\qquad\qquad\qquad\qquad\ \qquad\qquad\cdot\mathbb{E}\big[  \int_{t_{0}}^{t} x^{\rm T}(\upsilon)x(\upsilon) d\upsilon \big] ,
\end{align*}
which yields
\begin{equation*}
    \begin{aligned}
    &V({x}(t_0),\theta(t_0))\\
    &\geq -\max \Big\{\max_{\substack{m_1\in\mathcal{M}_1\\ m_2\in\mathcal{M}_2}}r_{\sigma} (\Psi_{1,i_1}^{m_1,m_2}) \},\max_{\substack{m_1\in\mathcal{M}_1\\ m_2\in\mathcal{M}_2}}r_{\sigma} (\Psi_{2,i_2}^{m_1,m_2}) \Big\}\\
    & \qquad\qquad\qquad\qquad\qquad\qquad\qquad\quad \cdot\mathbb{E}\big[  \int_{t_{0}}^{t} x^{\rm T}(\upsilon)x(\upsilon) d\upsilon \big] \\
    &:=r_{\rm max }\cdot\mathbb{E}\big[  \int_{t_{0}}^{t} x^{\rm T}(\upsilon)x(\upsilon) d\upsilon \big].
    \end{aligned}
\end{equation*}
Thus, we obtain
\begin{align*}
\mathbb{E}\big[  \int_{t_{0}}^{t} x^T(\upsilon)x(\upsilon) d\upsilon \big]\leq \frac{V({x}(t_0),\theta(t_0))}{r_{\rm max}}.
\end{align*}
This completes the proof.
\end{proof}

The following theorem provides sufficient conditions for the integrated MJLS with stabilizing controllers designed in the distributed fashion. 

\begin{theorem}\label{distributed_controller}
The integrated MJLS can be stochastically stabilized if there exist positive definite matrices $X_{k,i_k}>0$,  $Y^{m_1,m_2}_{k,i_k}>0$, for all $i_k\in\mathcal{S}_k$, $m_k\in\mathcal{M}_k$, and $\kappa_{k,i_k}>0$, $k=1,2$, satisfying
\begin{equation*}\label{sec2thm1}
\begin{aligned}
&X_{1,i_1} {A}^{{\rm T}}_{1,i_1} + Y^{m_1,m_2{\rm T}}_{1,i_1}{B}_{1,i_1}^{{\rm T}} + {A}_{1,i_1} X_{1,i_1} + {B}_{1,i_1}Y^{m_1,m_2}_{1,i_1}+ \lambda_{i_1i_1}^{m_2}X_{1,i_1} \\
&\ + X_{1,i_1} \left(\sum_{j_1\in{\mathcal{S}_1}/\{i_1\}} \lambda_{i_1j_1}^{m_2}(X_{1,j_1})^{-1}\right)X_{1,i_1} + \frac{1}{\kappa_{1,i_1}}D_{1,i_1}^{{\rm T}}{D}_{1,i_1}  < 0,\\
&X_{2,i_2} {A}^{{\rm T}}_{2,i_2} + Y^{m_1,m_2{\rm T}}_{2,i_2}{B}_{2,i_2}^{{\rm T}} + {A}_{2,i_2} X_{2,i_2} + {B}_{2,i_2}Y^{m_1,m_2}_{2,i_2}+ \mu_{i_2i_2}^{m_1}X_{2,i_2} \\
&\ + X_{2,i_2} \left(\sum_{j_2\in{\mathcal{S}_2}/\{i_2\}} \mu_{i_2j_2}^{m_1}(X_{2,j_2})^{-1}\right)X_{2,i_2} + \frac{1}{\kappa_{2,i_2}}D_{2,i_2}^{{\rm T}}{D}_{2,i_2}  < 0,
\end{aligned}
\end{equation*}
which is equivalent to 
\begin{equation*}
\begin{bmatrix}
\mathcal{E}_{1,i_1}^{m_1,m_2} & \Lambda_{1,i_1}^{m_2}\\
\star & -\mathcal{X}_{1,i_1}
\end{bmatrix}< 0,\quad\text{and}\quad  \begin{bmatrix}
\mathcal{E}_{2,i_2}^{m_1,m_2} & \Lambda_{2,i_2}^{m_1}\\
\star & -\mathcal{X}_{2,i_2}
\end{bmatrix}< 0,
\end{equation*}
where 
\begin{equation*}
	\begin{aligned}
	&\mathcal{E}_{1,i_1}^{m_1,m_2}:=X_{1,i_1} {A}_{1,i_1}^{{\rm T}} + Y_{1,i_1}^{m_1,m_2{\rm T}}{B}_{1,i_1}^{{\rm T}} + {A}_{1,i_1} X_{1,i_1} \\
	&\qquad\qquad\quad\quad+ {B}_{1,i_1} Y_{1,i_1}^{m_1,m_2} + \lambda_{i_1i_1}^{m_2}X_{1,i_1}+ ({1}/{\kappa_{1,i_1}}){D}_{1,i_1}^{{\rm T}}{D}_{1,i_1},\\
	&\mathcal{E}_{2,i_2}^{m_1,m_2}:=X_{2,i_2} {A}_{2,i_2}^{{\rm T}} + Y_{2,i_2}^{m_1,m_2{\rm T}}{B}_{2,i_2}^{{\rm T}} + {A}_{2,i_2} X_{2,i_2} \\
	&\qquad\qquad\quad\quad+ {B}_{2.i_2} Y_{2,i_2}^{m_1,m_2} + \mu_{i_2i_2}^{m_1}X_{2,i_2}+ ({1}/{\kappa_{2,i_2}}){D}_{2,i_2}^{\rm T}{D}_{2,i_2},\\
&\Lambda_{1,i_1}^{m_2} := [\sqrt{\lambda_{i_11}^{m_2}}X_{1,i_1},...,\sqrt{\lambda_{i_1(i_1-1)}^{m_2}}X_{1,i_1},\\
	&\qquad\qquad\qquad\qquad\qquad\quad\sqrt{\lambda_{i_1(i_1+1)}^{m_2}}X_{1,i_1},..., \sqrt{\lambda_{i_1|\mathcal{S}_1|}^{m_2}}X_{1,i_1}],\\
	&\Lambda_{2,i_2}^{m_1} := [\sqrt{\mu_{i_21}^{m_1}}X_{2,i_2},...,\sqrt{\mu_{i_2(i_2-1)}^{m_1}}X_{2,i_2}, \\
	&\qquad\qquad\qquad\qquad\qquad\quad\sqrt{\mu_{i_2(i_2+1)}^{m_1}}X_{2,i_2},..., \sqrt{\mu_{i_2|\mathcal{S}_2|}^{m_1}}X_{2,i_2}],\\
	&\mathcal{X}_{1,i_1} := \mathrm{diag}\{ X_{1,1},...,X_{1,i_1-1},X_{1,i_1+1},...,X_{1,|\mathcal{S}_1|}\}, \\ &\mathcal{X}_{2,i_2}:= \mathrm{diag}\{ X_{2,1},...,X_{2,i_2-1},X_{2,i_2+1},...,X_{2,|\mathcal{S}_2|}\}. 
\end{aligned}
\end{equation*}
Moreover, the control gain for System k is 
\begin{equation*}
G_{k,\hat{i}_k}^{m_1,m_2}= \sum_{i_k\in\mathcal{S}_k}\beta^{k,m_k}_{\hat{i}_ki_k} Y_{k,i_k}^{m_1,m_2}(X_{k,i_k})^{-1},\ \ k=1,2,
\end{equation*}
for all $ \hat{i}_k\in{\mathcal{S}}_k$.

\end{theorem}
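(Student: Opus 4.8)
The plan is to reduce Theorem~\ref{distributed_controller} to Theorem~\ref{thm21} and Corollary~\ref{coro1}. By Corollary~\ref{coro1}, stochastic stability of System~1 and System~2 \emph{individually} already guarantees stochastic stability of the integrated MJLS \eqref{sys_inte12}; hence it suffices to prove that the two LMIs in the statement are precisely the stabilization condition of Theorem~\ref{thm21} specialized to System~1 and System~2, respectively. The key observation is that each System~$k$ of \eqref{two_systems}, read on its own, \emph{is} a state-dependent MJLS with partial information of the kind treated in Section~\ref{overallsystem}: $\theta_k$ jumps with rate $\lambda^{m_2}_{i_1 j_1}$ (resp.\ $\mu^{m_1}_{i_2 j_2}$) and the observation obeys $\alpha^{k,m_k}$. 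Although the transition rate of $\theta_1$ depends on the \emph{other} system's cluster $m_2$ (and symmetrically), both the transition rates and the observation probabilities of System~$k$ are functions of the joint cluster index $(m_1,m_2)$, so one may run the entire argument of Theorem~\ref{thm21} for System~$k$ with the pair $(m_1,m_2)$ in the role of the single cluster index $m$ used there.

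Carrying this out for System~1: take $V_1(x_1,\theta_1)=x_1^{\rm T}P_{1,\theta_1}x_1$, set $X_{1,i_1}=P_{1,i_1}^{-1}$, and compute the infinitesimal generator via Lemma~\ref{simplify}, obtaining a term $x_1^{\rm T}\big(P_{1,i_1}\bar A^{m_1,m_2}_{1,i_1}+\bar A^{m_1,m_2{\rm T}}_{1,i_1}P_{1,i_1}+\sum_{j_1}\lambda^{m_2}_{i_1 j_1}P_{1,j_1}\big)x_1$ plus the cross term $2x_1^{\rm T}P_{1,i_1}D_{1,i_1}w_1$, where $\bar A^{m_1,m_2}_{1,i_1}=\sum_{\hat i_1}\alpha^{1,m_1}_{i_1\hat i_1}(A_{1,i_1}+B_{1,i_1}G^{m_1,m_2}_{1,\hat i_1})$. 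Bounding the cross term by $\frac{1}{\kappa_{1,i_1}}x_1^{\rm T}P_{1,i_1}D_{1,i_1}D^{\rm T}_{1,i_1}P_{1,i_1}x_1+\kappa_{1,i_1}w_1^{\rm T}w_1$ exactly as in the proof of Theorem~\ref{thm21}, then pre- and post-multiplying the resulting matrix by $X_{1,i_1}$ and substituting $Y^{m_1,m_2}_{1,i_1}=\bar G^{m_1,m_2}_{1,i_1}X_{1,i_1}$ with $\bar G^{m_1,m_2}_{1,i_1}=\sum_{\hat i_1}\alpha^{1,m_1}_{i_1\hat i_1}G^{m_1,m_2}_{1,\hat i_1}$, shows that the first displayed inequality — equivalently, by the Schur complement lemma, the block inequality built from $\mathcal E^{m_1,m_2}_{1,i_1}$, $\Lambda^{m_2}_{1,i_1}$, $\mathcal X_{1,i_1}$ — holds iff $\Psi^{m_1,m_2}_{1,i_1}<0$ for all $i_1\in\mathcal S_1$, $m_1\in\mathcal M_1$, $m_2\in\mathcal M_2$. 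Inverting $\bar G^{m_1,m_2}_{1,i_1}=\sum_{\hat i_1}\alpha^{1,m_1}_{i_1\hat i_1}G^{m_1,m_2}_{1,\hat i_1}$ through $\beta^{1,m_1}=(\alpha^{1,m_1})^{-1}$ (or its pseudo-inverse, as in \eqref{beta_hat2}, when singular) then recovers the claimed gain $G^{m_1,m_2}_{1,\hat i_1}=\sum_{i_1}\beta^{1,m_1}_{\hat i_1 i_1}Y^{m_1,m_2}_{1,i_1}X_{1,i_1}^{-1}$. The same computation for System~2 produces the second LMI, the matrices $\mathcal E^{m_1,m_2}_{2,i_2}$, $\Lambda^{m_1}_{2,i_2}$, $\mathcal X_{2,i_2}$, and the gain $G^{m_1,m_2}_{2,\hat i_2}$.

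With $\Psi^{m_1,m_2}_{1,i_1}<0$ and $\Psi^{m_1,m_2}_{2,i_2}<0$ for all modes and cluster pairs, the Dynkin-formula argument of Theorem~\ref{thm21} — applied over the successive sojourn intervals of each mode process — gives $\mathbb E\int_{t_0}^\infty|x_k(t)|^2\,dt<\infty$, $k=1,2$, i.e.\ each subsystem is stochastically stable. Corollary~\ref{coro1} then yields $\mathbb E\int_{t_0}^\infty|x(t)|^2\,dt<\infty$, the stochastic stability of the integrated MJLS \eqref{sys_inte12} under the distributed gains $G^{m_1,m_2}_{k,\hat i_k}$, which is the assertion.

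The step I expect to require the most care is the justification of this ``decoupled'' treatment: from the standpoint of System~1 alone, its mode transition rate matrix varies exogenously with $x_2(t)$, so Theorem~\ref{thm21} does not literally apply to a single fixed rate matrix. The remedy is that the LMI is imposed \emph{uniformly over all} $(m_1,m_2)$, so $\Psi^{m_1,m_2}_{1,i_1}<0$ regardless of which cluster $x_2(t)$ occupies; along any sample path the generator of $V_1$ is therefore always compared against one of these uniformly negative-definite matrices, and the bound $\mathcal L V_1-\kappa_{1,i_1}w_1^{\rm T}w_1\le \max_{m_1,m_2}r_\sigma(\Psi^{m_1,m_2}_{1,i_1})\,x_1^{\rm T}x_1$ carries through — exactly the mechanism already exploited in the proof of Corollary~\ref{coro1}. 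A minor secondary point is the bookkeeping for the pseudo-inverse case of $\alpha^{k,m_k}$, dispatched as in Theorem~\ref{thm21}.
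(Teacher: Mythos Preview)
Your proposal is correct and follows essentially the same approach as the paper, which simply states that the result ``straightforwardly follows from Theorem~\ref{thm21} and Corollary~\ref{coro1}.'' You have supplied the details the paper omits---in particular the specialization of the Theorem~\ref{thm21} argument to each subsystem with $(m_1,m_2)$ playing the role of $m$, and the observation that the uniform LMIs over all cluster pairs handle the exogenous dependence of each subsystem's rates on the other's state---but the structure is identical.
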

\begin{proof}
The proof straightforwardly follows from Theorem \ref{thm21} and Corollary \ref{coro1}.
\end{proof}

{\textit{Remark:} By comparing the designed stabilizing controllers in Sections \ref{overallsystem} and \ref{individualsystem}, we can find that the number of controllers is different in these two scenarios. Specifically, it requires $M_1M_2|\mathcal{S}_1||\mathcal{S}_2|$ number of controllers through the centralized design method (Section \ref{overallsystem}), while the distributed one reduces it to $M_1M_2(|\mathcal{S}_1|+|\mathcal{S}_2|)$ (Section \ref{individualsystem}), which simplifies the complexity of control design.}

\section{Numerical Experiments}\label{simulation}
In this section, we present a numerical example to illustrate the obtained analytical results. The parameters of the system are $\theta_1\in\mathcal{S}_1=\{1,2\}$ and $\theta_2\in\mathcal{S}_2=\{1,2,3\}$. The system matrices of the independent MJLSs are given as follows:
\begin{equation*}
    \begin{aligned}
        &A_{1,1}=\begin{bmatrix}
        5 & 2\\
        2 & 4
        \end{bmatrix},\ 
        A_{1,2}=\begin{bmatrix}
        5 & 2\\
        2 & 4
        \end{bmatrix},\\
        &B_{1,1}= \begin{bmatrix}
        1\\
        2
        \end{bmatrix},\ 
        B_{1,2} = \begin{bmatrix}
        2\\
        1
        \end{bmatrix},\ A_{2,1}=\begin{bmatrix}
        3 & 2 & 4\\
        5 & 2 & 6\\
        -9 & 0 & 2
        \end{bmatrix},\\
        &
        A_{2,2}=\begin{bmatrix}
        1 & 2 & 3\\
        2 & 1 & 0\\
        5 & 6 & 3
        \end{bmatrix},\ 
        A_{2,3}=\begin{bmatrix}
        4 & -1 & 8\\
        5 & 8 & 0\\
        -1 & 7 & 5
        \end{bmatrix},\\
        &B_{2,1}=\begin{bmatrix}
        1\\
        2\\
        1
        \end{bmatrix},\ 
        B_{2,2}\begin{bmatrix}
        1\\
        0\\
        1
        \end{bmatrix},\ 
        B_{2,3}=\begin{bmatrix}
        2\\
        1\\
        0
        \end{bmatrix}.
    \end{aligned}
\end{equation*}
In addition, the transition rate matrices are
\begin{equation*}
    \begin{aligned}
        &\lambda^1 = \begin{bmatrix}
        -0.6 & 0.6\\
        -0.4 & 0.4
        \end{bmatrix},
        \lambda^2 = \begin{bmatrix}
        -0.2 & 0.2\\
        -0.8 & 0.8
        \end{bmatrix},
        \lambda^3 = \begin{bmatrix}
        -0.5 & 0.5\\
        -1.2 & 1.2
        \end{bmatrix},\\
        &\mu^1=\begin{bmatrix}
        -0.8 & 0.2 & 0.6\\
        0.2 & -0.9 & 0.7\\
        0.5 & 0.4 & -0.9
        \end{bmatrix},
        \mu^2=\begin{bmatrix}
        -0.4 & 0.2 & 0.2\\
        0.2 & -0.5 & 0.4\\
        0.5 & 0.6 & -1.1
        \end{bmatrix}.
    \end{aligned}
\end{equation*}
Specifically, $\lambda^1$, $\lambda^2$ and $\lambda^3$ are transition rate matrices of System 1 under the conditions of $x_2\in\mathcal{C}_2^1=\{x_2:|x_2|^2<5\}$, $x_2\in\mathcal{C}_2^2=\{x_2:5\leq |x_2|^2\leq 10\}$, and $x_2\in\mathcal{C}_2^3=\{x_2:|x_2|^2>10\}$, respectively. Similarly, $\mu^1$ and $\mu^2$ are transition rate matrices of System 2 under the conditions of $x_1\in\mathcal{C}_1^1=\{x_1:|x_1|^2<10\}$, and, $x_1\in\mathcal{C}_1^2=\{x_1:|x_1|^2\geq 10\}$,  respectively.

Moreover, the observation matrices of System 1 and System 2 are given by $P^{m_1}=[\alpha_{i_1\hat{i}_1}^{1,m_1}]_{i_1,\hat{i}_1\in\mathcal{S}_1}$, $m_1=1,2$. and $Q^{m_2}=[\alpha_{i_2\hat{i}_2}^{2, m_2}]_{i_2,\hat{i}_2\in\mathcal{S}_2}$, $m_2=1,2,3$, respectively, with matrices taking the following forms:
$$
    \begin{aligned}
        &P^1 = \begin{bmatrix}
        0.9 & 0.1\\
        0.1 & 0.9
        \end{bmatrix},\ 
        P^2 = \begin{bmatrix}
        0.7 & 0.3\\
        0.3 & 0.7
        \end{bmatrix},\\
        &Q^1 = \begin{bmatrix}
        0.8 & 0.1 & 0.1\\
        0.1 & 0.8 & 0.1\\
        0.1 & 0.1 & 0.8
        \end{bmatrix},\ 
        Q^2 = \begin{bmatrix}
        0.7 & 0.2 & 0.1\\
        0.2 & 0.7 & 0.1\\
        0.2 & 0.1 & 0.7
        \end{bmatrix},\\
        &Q^3 = \begin{bmatrix}
        0.7 & 0.1 & 0.2\\
        0.1 & 0.7 & 0.2\\
        0.1 & 0.2 & 0.7
        \end{bmatrix}.
    \end{aligned}
$$

The stabilizing controllers are designed by solving LMIs in Theorem \ref{distributed_controller}. Specifically, the obtained controllers for System 1 are
\begin{equation*}
    \begin{aligned}
        &G_{1,1}^{1,1} =[-8.638\	-0.498],\ G_{1,1}^{1,2} =[-8.500\ -0.391],\\
        &G_{1,1}^{1,3} =[-8.610\	-0.477],\ G_{1,1}^{2,1} =[-4.878\ -0.501],\\
        &G_{1,1}^{2,2} =[-4.706\ -0.347],\ G_{1,1}^{2,3} =[-4.878\ -0.501],\\ &G_{1,2}^{1,1} =[-16.154\ 	-0.490],\ G_{1,2}^{1,2} =[-16.087\ -0.480],\\
        &G_{1,2}^{1,3} =[-16.076\ -0.431],\ G_{1,2}^{2,1} =[-19.913\ -0.487],\\
        &G_{1,2}^{2,2} =[-19.881\ -0.525],\ G_{1,2}^{2,3} =[-19.808\ -0.408],
    \end{aligned}
\end{equation*}
and the ones for System 2 are
\begin{equation*}
    \begin{aligned}
        &G_{2,1}^{1,1}=G_{2,1}^{1,2}=G_{2,1}^{1,3}=[-13.100\ -2.454\ 1.550],\\
        &G_{2,1}^{2,1}=G_{2,1}^{2,2}=G_{2,1}^{2,3}=[-17.592\ -0.798\ 5.666],\\
        &G_{2,2}^{1,1}=G_{2,2}^{1,2}=G_{2,2}^{1,3}=[-3.974\ -6.840\ 6.134],\\
        &G_{2,2}^{2,1}=G_{2,2}^{2,2}=G_{2,2}^{2,3}=[-4.071\ -7.606\ -5.580],\\
        &G_{2,3}^{1,1}=G_{2,3}^{1,2}=G_{2,3}^{1,3}=[0.427\ -23.902\ -22.903],\\
        &G_{2,2}^{2,1}=G_{2,2}^{2,2}=G_{2,2}^{2,3}=[0.266\ -23.881\ -22.386].
    \end{aligned}
\end{equation*}
 

\begin{figure}[t]
  \centering
  \subfigure[System 1's States]{
    \includegraphics[width=0.47\columnwidth]{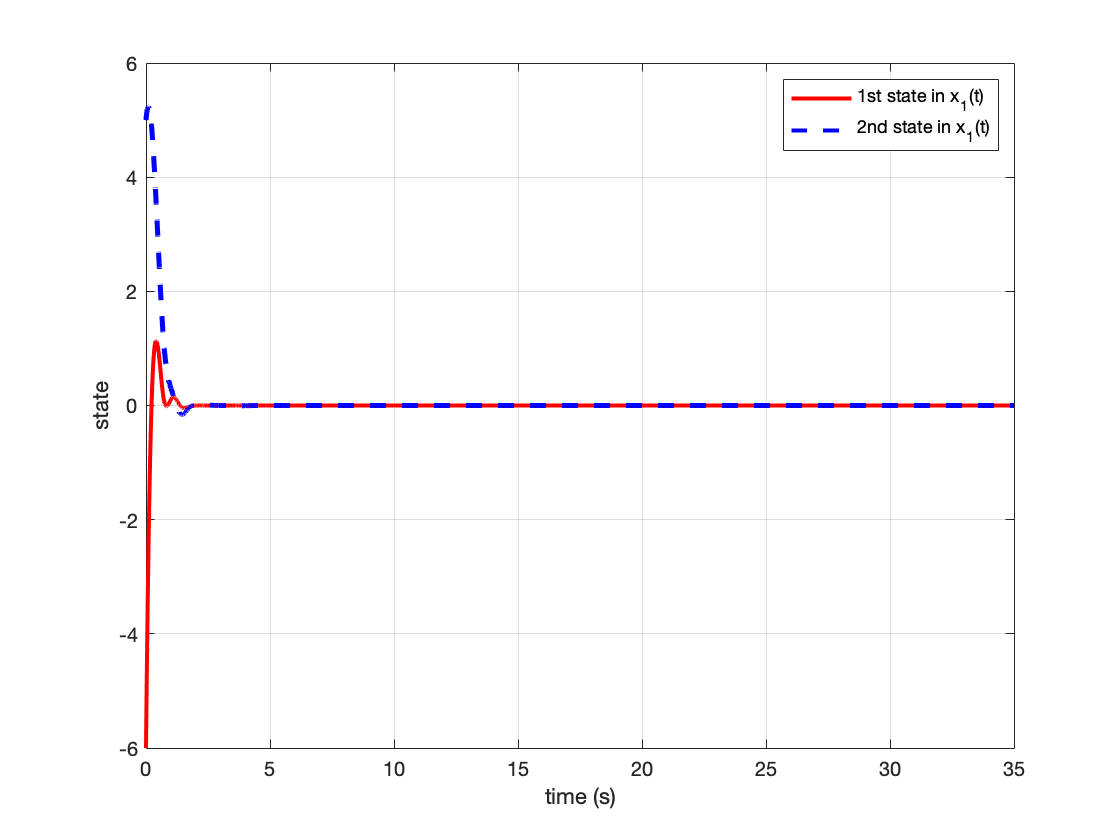}\label{state_1}}
	 \subfigure[System 2's States]{
    \includegraphics[width=0.47\columnwidth]{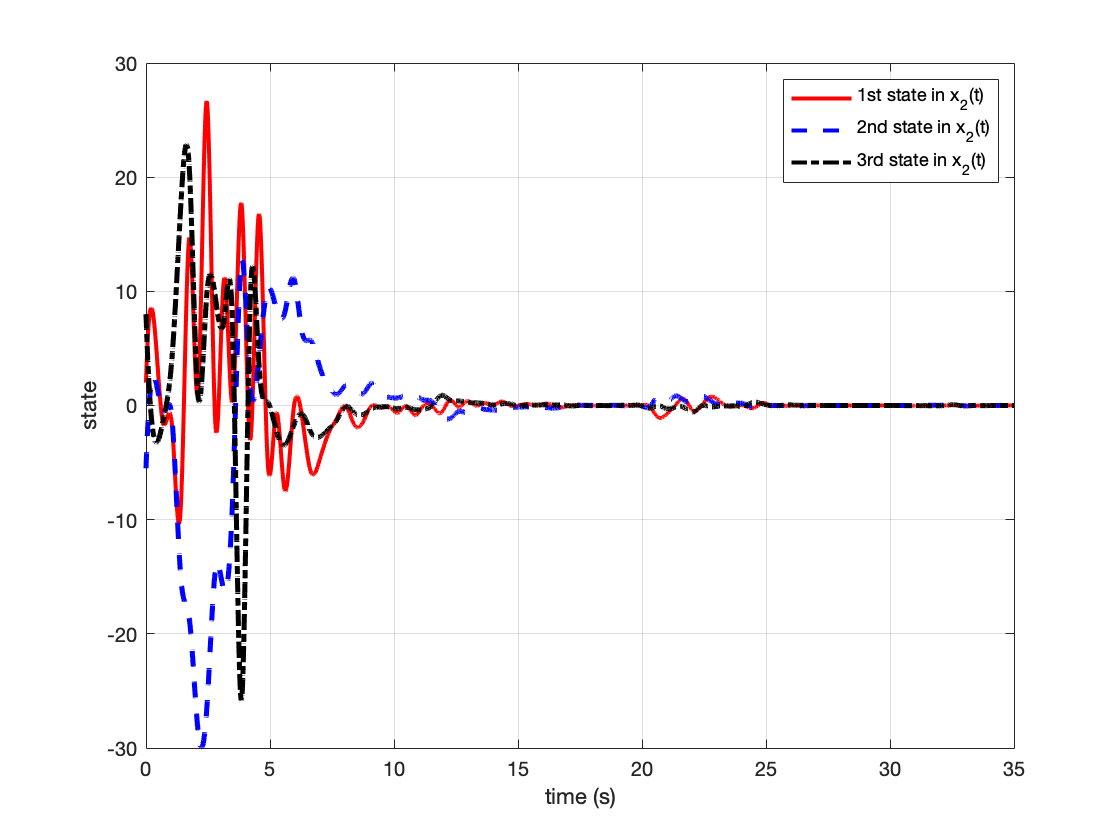}\label{state_2}}
  \caption[]{(a) and (b) show the stabilized state trajectories of System 1 and System 2, respectively, with the control designed under partial observation.}
  \label{sys_state}
\end{figure}


\begin{figure}[t]
  \centering
  \subfigure[System 1's Mode]{
    \includegraphics[width=0.47 \columnwidth]{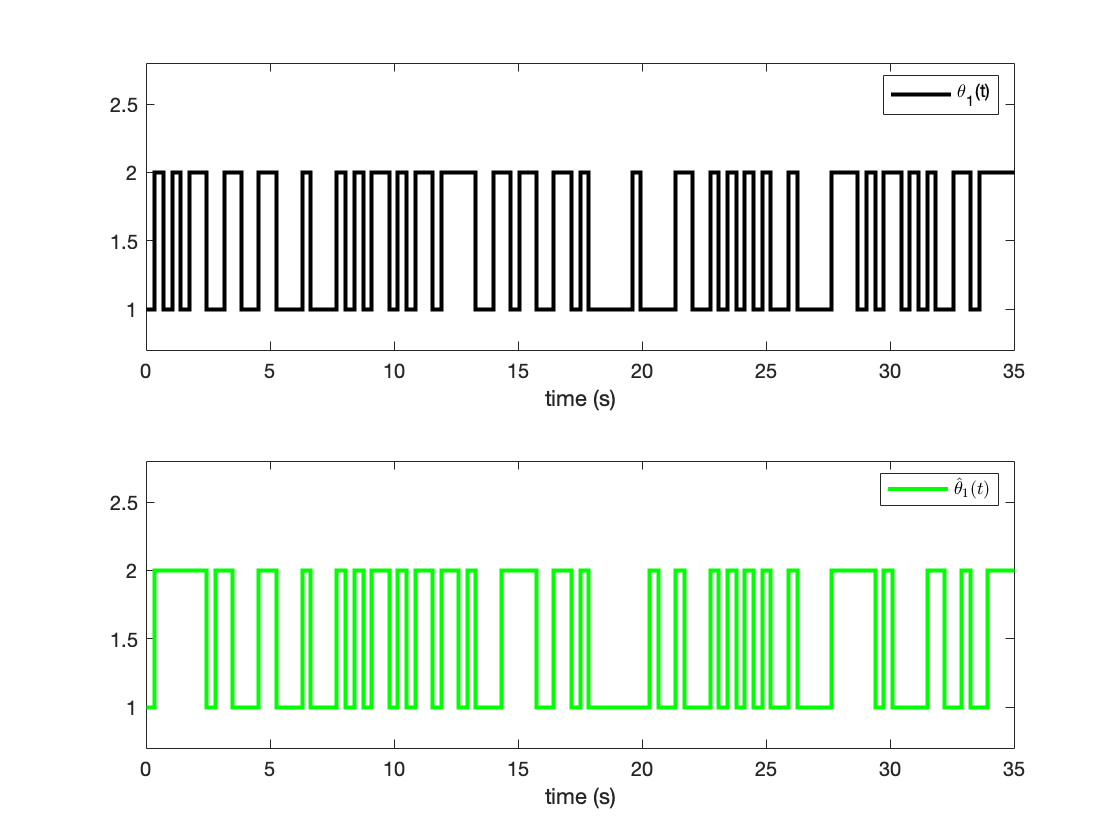}\label{markov_1}}
	 \subfigure[System 2's Mode]{
    \includegraphics[width=0.47\columnwidth]{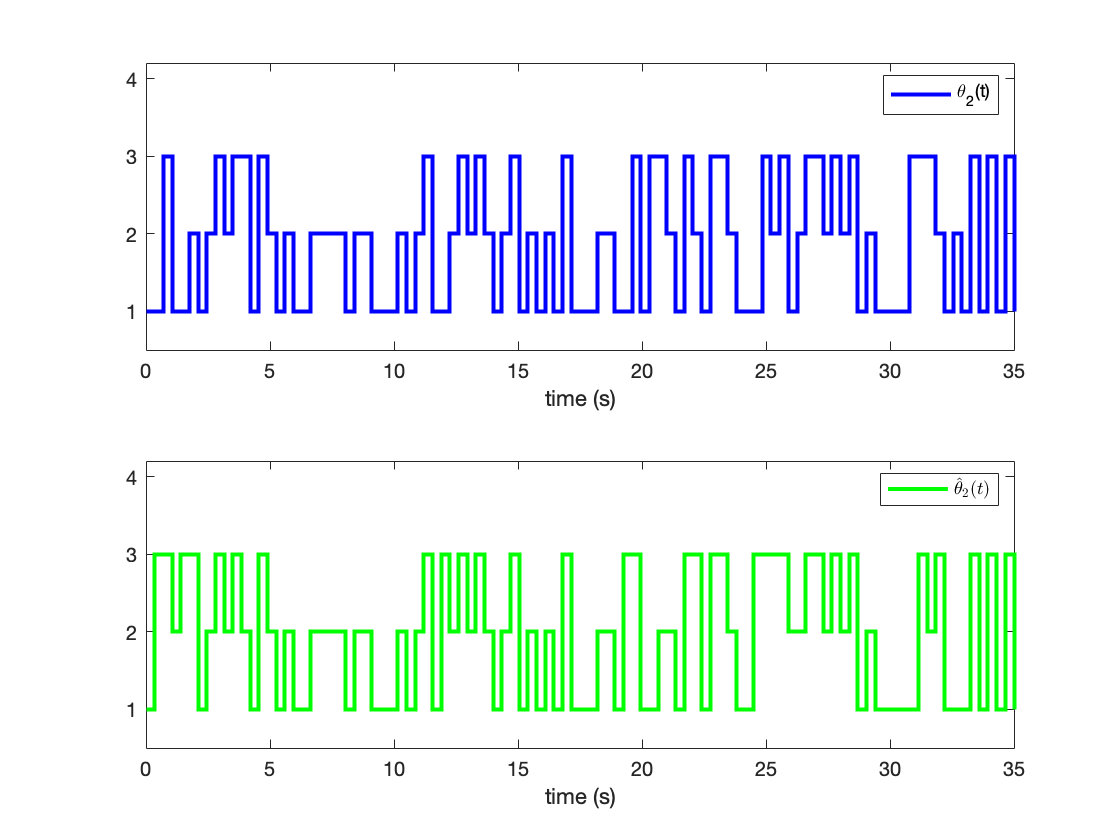}\label{markov_2}}
  \caption{The sampled Markov chains of System 1 and System 2, respectively.}
  \label{markov_state}
\end{figure}

\begin{figure}[t]
  \centering
  \subfigure[System 1's States]{
    \includegraphics[width=0.47\columnwidth]{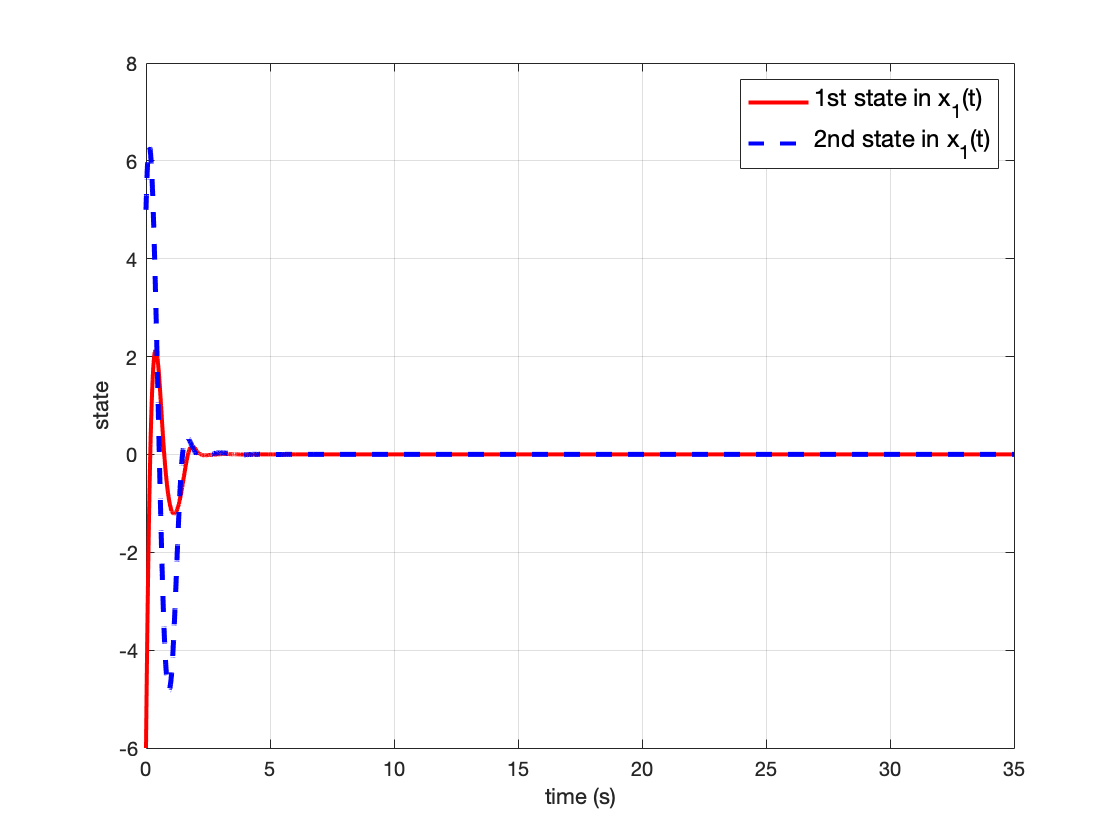}}
	 \subfigure[System 2's States]{
    \includegraphics[width=0.47\columnwidth]{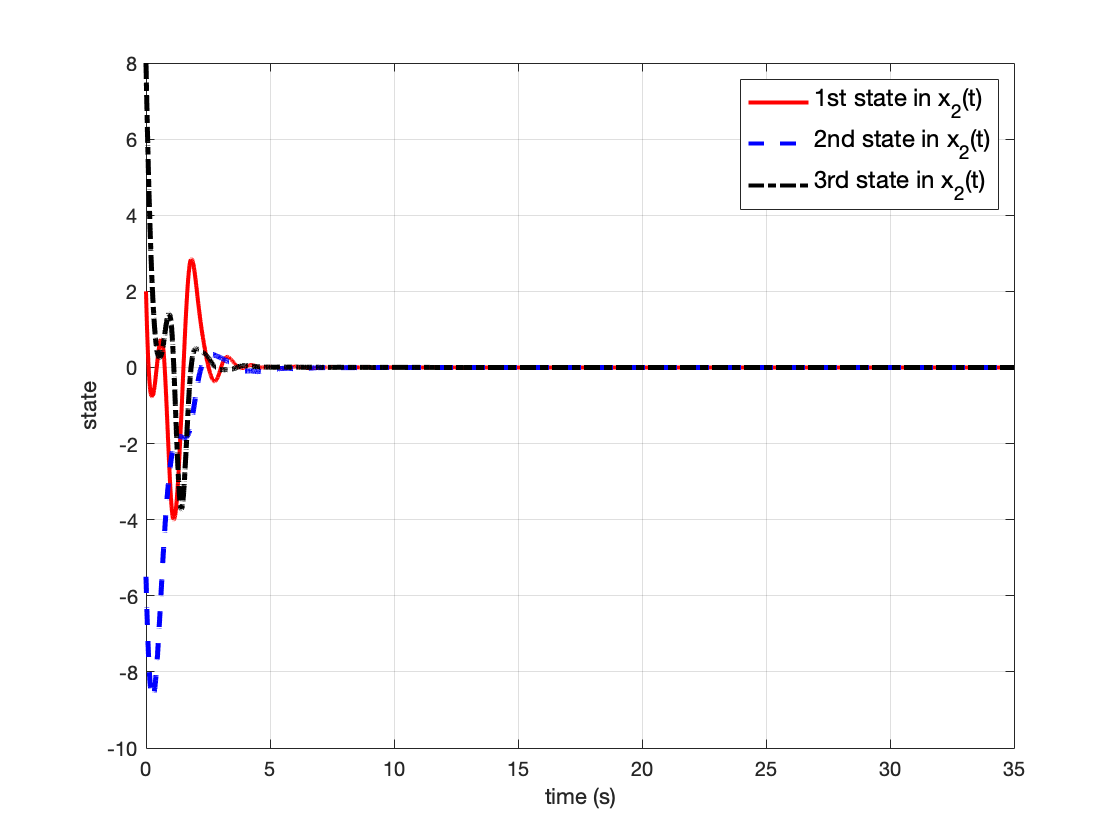}}
  \caption[]{{(a) and (b) show the stabilized state trajectories of System 1 and System 2, respectively, with the control designed under full observation.}}
  \label{state_full_info}
\end{figure}

With the designed controllers above, Fig. \ref{sys_state} shows the state trajectories of the interdependent systems with the initial conditions $x_1(0)=[-6, 5]^{\rm T}$, and $x_2(0)=[2, -5.5, 8]^{\rm T}$. Fig. \ref{markov_state} depicts the sampled Markov chains of the underlying parameters $\theta_1(t)$ and $\theta_2(t)$, and their observations $\hat{\theta}_1(t)$ and $\hat{\theta}_2(t)$, respectively. {For comparison, Fig. \ref{state_full_info} illustrates the results with the control designed under complete observation. With a perfect knowledge on the system mode, the state trajectories in Fig. \ref{state_full_info} are relatively smoother and converge to the origin faster than those in Fig. \ref{sys_state}. However, the advantage of the designed distributed control strategy lies in the fact that, though the systems' modes are not directly observable, it can still stabilize the interdependent MJLSs with satisfactory performance as shown in Fig. \ref{sys_state}.}

\section{Conclusion}\label{conclusion}
In this paper, we have studied the interdependent multiple MJLSs. We have designed distributed stabilizing controllers for each MJLS with partial information which only require the system state information and indirect observations of the local mode. In addition, these designed controllers are able to stabilize the integrated Markov jump system. The distributed feature of these controllers reduce the information exchange and communication costs between different Markov jump systems. {The future work would be extending the stabilizing control design to optimal control design considering state and control costs for the coupled MJLSs under incomplete information. }
 
\appendices
\section{Proof of Lemma \ref{simplify}}\label{appndix1}
Note that 
${x}(t+\Delta) = (I+\Delta\cdot {A}^m_{\theta(t)\hat{\theta}(t)}){x}(t) +  \Delta\cdot {D}_{\theta(t)}{w}(t),$ 
where $I$ is an identity matrix of appropriate dimension.
 By definition, the infinitesimal generator of $V$ is equal to
\begin{equation*}
    \begin{aligned}
&\mathcal{L}V({x}(t),{\theta}(t)) \\
&= \lim_{\Delta\rightarrow 0}  \frac{1}{\Delta} \Big\{ \mathbb{E}\big[V({x}(t+\Delta),{\theta}(t+\Delta))|{x}(t),\theta(t)\big] 
\\
& \qquad\qquad\qquad\qquad\qquad\qquad\qquad\qquad \quad - V({x}(t),{\theta}(t)) \Big\}\\
& = {x}^{\rm T}(t)\sum_{\hat{i}\in{\mathcal{S}}}\alpha_{i\hat{i}} \left( P_i{A}_{i\hat{i}}^{m} + {A}_{i\hat{i}}^{m{\rm T}}P_i +  \sum_{j\in\mathcal{S}} \gamma^{m}_{ij} P_j\right){x}(t)\\
&\qquad\qquad\qquad\qquad\qquad\qquad\qquad\qquad\ + 2{x}^{\rm T}(t) P_i {D}_i{w}(t)\\
& = {x}^{\rm T}(t)\left( P_i\bar{A}_{i}^{m} + \bar{A}_i^{m{\rm T}}P_i +  \sum_{j\in\mathcal{S}} \gamma^{m}_{ij} P_j\right){x}(t)\\
&\qquad\qquad\qquad\qquad\qquad\qquad\qquad\qquad\ + 2{x}^{\rm T}(t) P_i {D}_i{w}(t),
    \end{aligned}
\end{equation*}
{where the second equality is due to tower property and Assumption \ref{assump1}}.
\qed


\begin{thebibliography}{99}

\bibitem{c1}
Boukas El-Kbir, {\it Stochastic switching systems: analysis and design.} Birkhauser, 2006.

\bibitem{c2}
O. L. do V. Costa, M. D. Fragoso, and R. P. Marques, {\it Discrete-time Markov jump linear systems}. Springer, 2010.

\bibitem{c3}
D. P. De Farias, J. C. Geromel, J. B. R. Do Val and O. L. V. Costa, ``Output feedback control of Markov jump linear systems in continuous-time," in {\it IEEE Transactions on Automatic Control}, vol. 45, no. 5, pp. 944-949, 2000.

\bibitem{c4}
L. Zhang, ``$H_\infty$ estimation for discrete-time piecewise homogeneous Markov jump linear systems,” {\it Automatica}, vol. 45, no. 11, pp. 2570-2576, 2009.

\bibitem{c5}
X. Feng, K. A. Loparo, Y. Ji and H. J. Chizeck, ``Stochastic stability properties of jump linear systems," in {\it IEEE Transactions on Automatic Control}, vol. 37, no. 1, pp. 38-53, 1992.


\bibitem{c7}
P. Tavner, D. M. Greenwood, M. W. G. Whittle, R. Gindele, S. Faulstich, and B. Hahn, ``Study of weather and location effects on wind turbine failure rates,” {\it Wind Energy}, vol. 16, no. 2, pp. 175–187, 2012.

\bibitem{c8}
M. Liu, D. W. Ho, and Y. Niu, ``Stabilization of Markovian jump linear system over networks with random communication delay,” {\it Automatica}, vol. 45, no. 2, pp. 416–421, 2009.

\bibitem{c9}
S. C. W. Ong, S. W. Png, D. Hsu, and W. S. Lee, ``Planning under Uncertainty for Robotic Tasks with Mixed Observability,” {\it The International Journal of Robotics Research}, vol. 29, no. 8, pp. 1053–1068, 2010.

\bibitem{c10}
A. V. Gheorghe, ``Partially Observable Markovian Decision Processes,” in {\it Decision Processes in Dynamic Probabilistic System}, Mathematics and its Applications (East European Series), vol 42, pp. 187–260, 1990.

\bibitem{c11}
D. Hernandez-Hernández and S. I. Marcus, ``Risk sensitive control of Markov processes in countable state space,” {\it Systems \& Control Letters}, vol. 29, no. 3, pp. 147–155, 1996.

\bibitem{c12}
Z. Wu, M. Cui, P. Shi and H. R. Karimi, ``Stability of Stochastic Nonlinear Systems With State-Dependent Switching," in {\it IEEE Transactions on Automatic Control}, vol. 58, no. 8, pp. 1904-1918, 2013.

\bibitem{c13}
J. Gao, S. V. Buldyrev, H. E. Stanley, and S. Havlin, ``Networks formed from interdependent networks,” {\it Nature Physics}, vol. 8, no. 1, pp. 40–48, 2011.

\bibitem{c14}
R. Parshani, S. V. Buldyrev, and S. Havlin, ``Interdependent Networks: Reducing the Coupling Strength Leads to a Change from a First to Second Order Percolation Transition,” {\it Physical Review Letters}, vol. 105, no. 4, 2010.

\bibitem{c15}
D. Gregorio, A. Scala, ``Networks of Networks: The Last Frontier of Complexity,” {\it Understanding Complex Systems}, Springer, 2014.

\bibitem{c16}
S. V. Buldyrev, R. Parshani, G. Paul, H. E. Stanley, and S. Havlin, ``Catastrophic cascade of failures in interdependent networks,” {\it Nature}, vol. 464, no. 7291, pp. 1025–1028, 2010.

\bibitem{c17}
C. D. Brummitt, R. M. Dsouza, and E. A. Leicht, ``Suppressing cascades of load in interdependent networks,” {\it Proceedings of the National Academy of Sciences}, vol. 109, no. 12, 2012.

\bibitem{c18}
P. Crucitti, V. Latora, and M. Marchiori, ``Model for cascading failures in complex networks,” {\it Physical Review E}, vol. 69, no. 4, 2004.

\bibitem{wuTAC2013}
Wu, Zhaojing, Mingyue Cui, Peng Shi, and Hamid Reza Karimi. ``Stability of stochastic nonlinear systems with state-dependent switching." {\it IEEE Transactions on Automatic Control}, vol 58, no. 8, pp. 1904-1918, 2013.

\bibitem{c19}
L. Xie, “Output feedback $H_\infty$ control of systems with parameter uncertainty,” {\it International Journal of Control}, vol. 63, no. 4, pp. 741–750, 1996.

\bibitem{c20}
D. V. Ouellette, ``Schur complements and statistics,” {\it Linear Algebra and its Applications}, vol. 36, pp. 187–295, 1981.



\end{thebibliography}
\end{document}